\documentclass{article}

\usepackage[preprint]{neurips_2026}

\usepackage{enumitem}
\usepackage{natbib}
\usepackage{graphicx,wrapfig}
\usepackage[caption = false]{subfig}
\usepackage{subcaption}
\usepackage{dcolumn}
\usepackage[utf8]{inputenc}
\usepackage{blindtext}
\usepackage{xcolor}
\usepackage{algorithm} 
\usepackage{algpseudocode} 
\usepackage{tikz}
\usetikzlibrary{trees}
 \usepackage{soul}
 \usepackage{bm}
 \usepackage{mathtools,amsmath,amssymb,amsthm}
 \usepackage{float}
 \usepackage[caption = false]{subfig}
 \usepackage{graphicx}
\usepackage{multirow}
\usetikzlibrary{patterns,shapes.geometric, arrows.meta,positioning}

\theoremstyle{definition}

\newtheorem{proposition}{Proposition}
\newtheorem{remark}{Remark}

\title{Beyond DSA: Conjugacy-based Comparison of Dynamical Systems}

\author{%
  Prakhar Godara \\
  Department of Psychology\\
  New York University, NY 10003, USA \\
  \texttt{prakhargodara@gmail.com} \\
   \And
   Pang Shiang Tay \\
  Department of Mathematics\\
  G\"ottingen University, 37073, Germany \\
  \texttt{pangshiang.tay@stud.uni-goettingen.de} \\
   \And
   Marcelo G Mattar \\
  Department of Psychology\\
  New York University, NY 10003, USA \\
  \texttt{mm13100@nyu.edu} 
}

\begin{document}

\maketitle

\begin{abstract}
Comparing whether two dynamical systems implement the same computation despite differences in coordinates or measurements is a central problem in neuroscience and machine learning. Dynamical Similarity Analysis \citep[DSA;][]{ostrow2023beyond} addresses this problem by aligning finite-dimensional Koopman approximations of the two systems through an orthogonal similarity transformation. Here we show that orthogonal alignment is neither necessary nor sufficient for topological conjugacy: genuinely conjugate systems may be related by a non-orthogonal basis-transfer matrix that DSA cannot capture, while non-conjugate systems may have orthogonally equivalent Koopman operators that DSA will fail to distinguish. We then use this observation to formulate \emph{Conjugacy-based Similarity Analysis} (CSA), which restricts alignments to those induced by candidate state-space bijections rather than arbitrary orthogonal matrices. We prove that CSA's fitted alignment is the finite-data projection of the composition operator associated with the candidate bijection, and use controlled examples to show why this distinction matters when observable dictionaries are chosen explicitly or implicitly from data. Together, these results clarify what Koopman-based similarity measures must ensure to support claims of identifying conjugacies between computational systems.
\end{abstract}

\section{Introduction}
\label{sec:introduction}

Comparing the computations performed by two dynamical systems is a central methodological question across neuroscience and machine learning. In neuroscience, such comparisons are used to validate computational models against neural recordings \citep{schrimpf2020brain, pagan2022new}, to align brain--computer interfaces across recording sessions \citep{degenhart2020stabilization, karpowicz2025stabilizing}, and to ask whether different circuits implement equivalent computations \citep{chaudhuri2019intrinsic}. In machine learning, similarity measures between learned representations are used to characterize training dynamics, architectural invariances, and model families \citep{kriegeskorte2008representational, raghu2017svcca, williams2021generalized}. Across both fields, the underlying question is the same: when do two systems compute in the same way? 

A natural way to make the question of computational equivalence precise is to ask whether one system can be obtained from the other by a relabelling of states. Concretely, suppose a bijection $h$ matches each state $x$ of one system to a state $h(x)$ of the other. 
If running the first system for one step and then relabelling produces the same result as relabelling first and then running the second system --- that is, if $h \circ f = g \circ h$---, then the two systems generate the same trajectories up to a renaming of their state variables. In dynamical systems theory this property is called \emph{topological conjugacy}. %Formally, two systems \(f:X\to X\) and \(g:Y\to Y\) are conjugate if there exists a bijection \(h:X\to Y\) rellabeling each state $x$ into $h(x)$, such that $h \circ f = g \circ h$. 
Conjugacy thus formalizes the intuition behind ``same computation'' by ignoring the particular coordinates in which states are described and asking only whether the underlying transition rule is the same. Geometric similarity, by contrast, fails to capture this invariance: it compares the shapes traced out by trajectories rather than the rules that generate them \citep{kriegeskorte2008representational, raghu2017svcca, williams2021generalized}.

Testing conjugacy directly is difficult because the relabelling function $h$ is unknown and lives in an infinite-dimensional space of candidate maps. Koopman operator theory offers a route around this. The Koopman operator represents a possibly nonlinear dynamical system through its linear action on observables, providing a principled language for learning and comparing dynamical structure from data \citep{koopman1931hamiltonian}. Building on this framework, Dynamical Similarity Analysis (DSA) was recently introduced to compare dynamical systems in terms of their finite-dimensional Koopman approximations rather than raw state-space geometry \citep{ostrow2023beyond}. DSA has since been adopted in systems neuroscience and deep learning as the operator-level alternative to geometric similarity.

DSA, however, looks for the wrong kind of match between Koopman operators. Conjugacy implies that a relabelling induces a specific transformation of measurements, and it is this transformation that should align the two Koopman operators. DSA ignores this, searching instead for any orthogonal alignment between Koopman matrices, whether or not it corresponds to a relabelling of states. This mismatch creates two failure modes: DSA can make genuinely different systems look similar when an arbitrary orthogonal alignment hides their difference, and can make genuinely equivalent systems look different when the correct state-by-state correspondence is outside its alignment class. In other words, DSA compares learned operator representations without requiring the alignment to be a valid change of coordinates in the underlying dynamical system. Consequently, neither a small nor a large DSA distance supports a clean conclusion about the dynamics: a small distance need not imply the same computation, and a large distance need not imply different computations.

To address this problem, we introduce \emph{conjugacy-based similarity analysis} (CSA), a template for comparing finite-dimensional Koopman approximations through candidate conjugacies rather than arbitrary orthogonal alignments. Given a parameterized family of candidate relabellings $\{h_\theta\}_{\theta \in \Theta}$ (eg: monotone  polynomials of a bounded degree, etc.), CSA constructs from each $h_\theta$ the matrix $P_\theta$ that records how measurements transform under $h_\theta$, and then asks how well this $P_\theta$ aligns the two Koopman matrices. The score is the smallest alignment error obtained as $\theta$ varies over the family. Because every $P_\theta$ comes, by construction, from an actual relabeling of states, any alignment found by CSA could have been produced by a conjugacy, a guarantee that does not hold for alignments found by DSA. Empirically, CSA correctly identifies conjugate systems in the very regimes where DSA's failure modes predict that it cannot, both when the measurement dictionary is chosen explicitly and implicitly from data.

Our contributions are threefold. \textbf{First}, we show that orthogonal alignment of Koopman matrices is not, in general, a valid criterion for conjugacy, even before approximation error is considered (Section~\ref{sec:diagnosis}). \textbf{Second}, we introduce CSA (Section~\ref{sec:CSA}) as a template for conjugacy-aware Koopman comparison, and prove that the matrix it fits is the correct finite-dimensional shadow of the composition operator induced by a candidate relabelling, so that CSA estimates a well-defined operator-theoretic object rather than a heuristic alignment. \textbf{Third}, we instantiate this template in controlled settings (Section~\ref{sec:results}) and show empirically that the distinction between orthogonal alignment and relabelling-induced alignment changes the conclusions one draws about conjugacy, both under state-space transformations and under changes of measurement basis.

\section{Theory}
\label{sec:background}

\subsection{Dynamical systems and conjugacy}
\label{sec:bg-conjugacy}

We consider discrete-time dynamical systems \(f:X\to X\) and \(g:Y\to Y\), where \(X\) and \(Y\) are state spaces. Two such systems are \emph{topologically conjugate} if there exists a bijection $h: X \to Y$ such that
\begin{equation}
\label{eq:spatial_conjugacy}
    h\circ f = g\circ h ,
\end{equation}
in which case $h$ provides a state-by-state correspondence under which the two systems generate the same trajectories. Testing conjugacy directly is difficult because the relabelling \(h\) is unknown and, in general, must be searched for over a large function space of candidate maps. The remainder of this section introduces the linear operators that allow us to work with conjugacy through measurements rather than through $h$ directly.

\subsection{Koopman operators}
\label{sec:bg-koopman}

The Koopman operator lifts a possibly nonlinear dynamical system to a linear operator on observables \citep{koopman1931hamiltonian}. Let \(\mathcal H_X\) and \(\mathcal H_Y\) be Hilbert spaces of scalar-valued observables on \(X\) and \(Y\). The Koopman operators associated with \(f\) and \(g\) are
\[
    (\mathcal K_f u)(x)=u(f(x)), \qquad
    (\mathcal K_g v)(y)=v(g(y)),
\]
for observables \(u\in\mathcal H_X\) and \(v\in\mathcal H_Y\). Thus, although \(f\) and \(g\) may be nonlinear maps on state space, \(\mathcal K_f\) and \(\mathcal K_g\) are linear operators on observable spaces. This replaces a nonlinear state-space comparison problem with a linear operator comparison problem, at the cost of working in infinite-dimensional function spaces.

\subsection{Composition operators induced by bijections}
\label{sec:bg-composition}

Any bijection between state spaces induces a corresponding linear map between observable spaces. Specifically, a bijection \(h: X \to Y\) induces the \emph{composition operator}
\begin{equation}
    \mathcal C_h:\mathcal H_Y\to\mathcal H_X,
    \qquad
    (\mathcal C_h v)(x)=v(h(x)).
\end{equation}
which pulls back observables on \(Y\) to observables on \(X\) along \(h\). The composition operator is the natural counterpart, at the level of measurements, of the relabelling \(h\) at the level of states. When \(f\) and \(g\) are conjugate, this specific operator---not an arbitrary linear map between observable spaces---connects \(\mathcal K_f\) and \(\mathcal K_g\). Indeed, the conjugacy relation in Eq.~\eqref{eq:spatial_conjugacy} implies that this composition operator intertwines the two Koopman operators. Thus, for any \(v\in\mathcal H_Y\),
\begin{align}
    (\mathcal C_h\mathcal K_g v)(x)
    = (\mathcal K_g v)(h(x)) 
    = v(g(h(x))) 
    = v(h(f(x))) 
    = (\mathcal C_h v)(f(x)) 
    = (\mathcal K_f\mathcal C_h v)(x).
\end{align}
Therefore
\begin{equation}
\label{eq:koopman_intertwining}
    \mathcal C_h\mathcal K_g
    =
    \mathcal K_f\mathcal C_h .
\end{equation}
This is the operator-theoretic consequence of topological conjugacy: the intertwiner is not an arbitrary operator on observables, but specifically the composition operator induced by the same state-space map \(h\) that conjugates the dynamics.

\subsection{Matrix representation of the intertwining relation}
\label{sec:bg-matrix}

To express the operator intertwining \eqref{eq:koopman_intertwining} in coordinates, let \(\{\phi_i\}_{i\geq 1}\) and \(\{\psi_j\}_{j\geq 1}\) be orthonormal bases of \(\mathcal H_X\) and \(\mathcal H_Y\), respectively, and write
\[
    \Phi(x)
    =
    (\phi_1(x),\phi_2(x),\ldots)^\top,
    \qquad
    \Psi(y)
    =
    (\psi_1(y),\psi_2(y),\ldots)^\top .
\]
The matrix representations $F$ and $G$ of the Koopman operators $\mathcal K_f$ and $\mathcal K_g$ in these bases are defined by
\begin{equation}
\label{eq:koopman_matrix_representations}
    \Phi(f(x)) = F\Phi(x),
    \qquad
    \Psi(g(y)) = G\Psi(y),
\end{equation}
and the matrix representation of the composition operator $\mathcal C_h$ is the matrix $P$ satisfying
\begin{equation}
\label{eq:composition_matrix_condition}
    \Psi(h(x)) = P\Phi(x).
\end{equation}
The $j$th row of $P$ gives the coefficients, in the $\Phi$-basis, of the pulled-back observable $\mathcal C_h \psi_j = \psi_j \circ h$; thus $P$ is not an arbitrary change of coordinates but the coordinate representation of a pullback. Combining Eqs.~\eqref{eq:spatial_conjugacy}, \eqref{eq:koopman_matrix_representations} and \eqref{eq:composition_matrix_condition} yields the matrix-level counterpart of the operator intertwining \eqref{eq:koopman_intertwining}:
\begin{align}
    G P \Phi(x)
    = G\Psi(h(x))
    = \Psi(g(h(x)))
    = \Psi(h(f(x)))
    = P\Phi(f(x))
    = P F\Phi(x).
\end{align}
Since this holds for all $x$, the matrices satisfy
\begin{equation}
\label{eq:matrix_intertwining}
    P F = G P,
\end{equation}
or equivalently, when $P$ is invertible,
\begin{equation}
\label{eq:matrix_similarity}
    G = P F P^{-1}.
\end{equation}
This is the matrix-level signature of conjugacy: the two Koopman representations are related not by an arbitrary alignment, but by the matrix representation of the composition operator induced by the state-space relabelling \(h\).

The next section asks whether the class of alignments used by DSA matches the class of alignments implied by conjugacy.

\section{Why DSA fails as a test of conjugacy}
\label{sec:diagnosis}

DSA attacks the conjugacy problem by directly aiming to approximate Eq. \eqref{eq:matrix_similarity}. However, as aforementioned, the matrix relation \(PF = GP\) is not an arbitrary similarity: \(P\) is the coordinate representation of a pullback induced by a candidate state-space map \(h\), and must therefore satisfy Eq. \eqref{eq:composition_matrix_condition} for some $h$---i.e., it cannot be chosen freely. Any criterion intended to detect conjugacy must therefore search over alignments that could plausibly arise from such pullbacks. The following two subsections show that DSA's orthogonal search fails this requirement, both in the alignment class it considers (\S\ref{sec:diagnosis-dsa-mismatch}) and in the information it discards when computing distances (\S\ref{sec:diagnosis-representational}).

\subsection{DSA searches over the wrong alignment class}
\label{sec:diagnosis-dsa-mismatch}

Given finite-dimensional Koopman approximations \(F\) and \(G\) of the same size, DSA computes
\begin{equation}
\label{eq:dsa_distance}
    d_{\mathrm{DSA}}(F,G)
    =
    \min_{Q\in O(m)}
    \|G-QFQ^{-1}\|_F .
\end{equation}
That is, DSA asks whether the two Koopman matrices can be made similar by an orthogonal change of coordinates, whereas \(PF=GP\) asks for a composition-induced \(P\). These two classes are not in containment in either direction.

On one hand, DSA is too permissive because not every orthogonal alignment is composition-induced. A defining property of a composition operator is that it is distributive over multiplication of observable functions, but a generic orthonormal operator does not have this property (Appendix~\ref{app:unitary_not_composition} provides a simple example). Crucially, this is not an artifact coming from the finite-dimensional approximations: classical results in ergodic theory exhibit measure-preserving systems whose true (infinite-dimensional) Koopman operators are unitarily equivalent without being topologically conjugate \citep{redei2012history}, and Appendix~\ref{app:counterexample} works through one such case for Bernoulli shifts. 

On the other hand, DSA is also too restrictive because not every composition-induced alignment is orthogonal. The composition operator \(\mathcal C_h\) is unitary only under additional assumptions, most notably when \(h\) is measure preserving \citep{budivsic2012applied} (see Appendix~\ref{app:when_is_P_unitary?}). Outside that special case, the correct intertwiner falls outside the orthogonal class searched by DSA, and DSA can report a large discrepancy even between genuinely conjugate systems.

\subsection{The mismatch is representational, not merely numerical}
\label{sec:diagnosis-representational}

Beyond optimizing over an inappropriate domain, DSA ignores too much information when computing distances. For instance, when solving the optimization problem in Eq. \eqref{eq:dsa_distance}, DSA ignores which subspaces are $F,G$ operators in and how those spaces are relatively oriented. Rather it only views them as numerical arrays and looks for similarity. The form this issue takes depends on how the Koopman approximation is learned. In EDMD, the observable dictionary is chosen explicitly, so the basis-transfer problem is directly visible. In Hankel-DMD one identifies the right subspace by taking the SVD of the Hankel matrix \citep{arbabi2017ergodic}. When comparing the corresponding Koopman matrices, DSA ignores the information of \textit{which} lags were chosen by the SVD procedure, which is arguably relevant for evaluating conjugacy.

These observations identify the design requirement for a conjugacy-aware comparison criterion. Rather than comparing Koopman matrices as bare numerical arrays, one should compare them through alignments that can be interpreted as projected pullbacks of candidate state-space maps. The next section introduces \emph{conjugacy-based similarity analysis} (CSA), which implements this idea: CSA first constructs the finite-dimensional transfer induced by a candidate relabelling, and only then asks whether that transfer intertwines the two Koopman approximations.

\section{Conjugacy-based Similarity Analysis (CSA)}
\label{sec:CSA}

%\paragraph{Setup.} 
Let \(f:X\to X\) and \(g:Y\to Y\) be discrete-time dynamical systems as above. We now assume finite observable dictionaries
\[
    \Phi=(\phi_1,\ldots,\phi_m)^\top:X\to\mathbb R^m,
    \qquad
    \Psi=(\psi_1,\ldots,\psi_n)^\top:Y\to\mathbb R^n,
\]
and finite-dimensional Koopman approximations
\(F\in\mathbb R^{m\times m}\) and \(G\in\mathbb R^{n\times n}\) (where $m$ need not be equal to $n$), for instance fitted by EDMD \citep{williams2015data} or Hankel-DMD \citep{arbabi2017ergodic}, satisfying
\[
    \Phi(f(x))\approx F\Phi(x),
    \qquad
    \Psi(g(y))\approx G\Psi(y).
\]

%\paragraph{Inner problem: the basis-transfer matrix.} 
CSA restricts the admissible alignments to those that arise from candidate state-space maps. Let \(\{h_\theta:X\to Y:\theta\in\Theta\}\) be a parameterized family of candidate bijections. This family specifies the class of conjugacies over which we search. In low-dimensional examples, it can be chosen explicitly; in higher-dimensional settings, one could use flexible parameterizations of diffeomorphisms, such as invertible neural architectures or residual networks with invertibility constraints \citep{chen2024dform,huang2021residual}. For each \(h_\theta\), we approximate the induced pullback composition operator
\[
    \mathcal C_{h_\theta}v = v\circ h_\theta
\]
as a transfer from the \(\Psi\)-dictionary to the \(\Phi\)-dictionary. Concretely, given sample points \(x_1,\ldots,x_N\in X\), we solve
\begin{equation}
\label{eq:CSA_inner}
    P_\theta
    =
    \arg\min_{P\in\mathbb R^{n\times m}}
    \sum_{i=1}^{N}
    \bigl\|
        \Psi(h_\theta(x_i)) - P\Phi(x_i)
    \bigr\|_2^2 .
\end{equation}
Here \(P_\theta\in\mathbb R^{n\times m}\). It need not be square when the two dictionaries have different dimensions. The purpose of this inner problem is to ensure that the alignment matrix is not arbitrary: \(P_\theta\) is the finite-data approximation to the projected pullback induced by \(h_\theta\). In the infinite-data limit, and with complete bases of the Hilbert spaces, Equation \eqref{eq:CSA_inner} recovers the coordinate representation of the corresponding composition operator; see Appendix~\ref{app:CSA_projected_composition_operator} for proofs.

%\paragraph{Outer problem: search over candidate conjugacies.} 
Once we have restricted $P_\theta$ to the class of projected composition operators arising from $\Theta$,  we can optimize over the space of bijections
\begin{equation}
\label{eq:CSA_outer}
    d_{\mathrm{CSA}}(F,G)
    :=
    \inf_{\theta\in\Theta}
    \frac{
        \bigl\|P_\theta F - G P_\theta\bigr\|_F
    }{
        \|P_\theta\|_F
    } .
\end{equation} Thus Eq.~\eqref{eq:CSA_outer} measures the residual of the finite-dimensional intertwining relation
\[
    P_\theta F \approx G P_\theta .
\]
When \(m=n\) and \(P_\theta\) is invertible, this is equivalent to the similarity relation
\[
    G\approx P_\theta F P_\theta^{-1}.
\]
However, CSA does not require \(P_\theta\) to be square or invertible. The residual form is therefore more general than direct matrix similarity. The normalisation by \(\|P_\theta\|_F\) makes the intertwining residual relative to the size of the projected pullback. Pseudocode and optimisation details are given in Appendix~\ref{app:pseudocode}.

\section{Example comparisons between CSA and DSA}
\label{sec:results}

We now test the implications of Sections~\ref{sec:diagnosis} and \ref{sec:CSA} in controlled settings where the relevant state-space transformations and observable bases are known. The controlled setup is essential: by knowing the true bijection $h$ in advance, we can construct pairs of systems that are exactly conjugate by construction and ask whether each method recognizes them as such. As DSA is proposed as an operator-level comparison that should not depend on how the Koopman matrices are learned \citep{ostrow2023beyond}, we instantiate the same test in two pipelines: EDMD with explicit finite dictionaries (\S\ref{sec:results_edmd}), and a Hankel-DMD/Krylov pipeline closer to the original paper's native setting (\S\ref{sec:CSA_krylov}). In each pipeline we run a parameter-indexed family of systems through a known representational change on one side of the comparison --- a state-space conjugacy or a different observable basis --- and ask how faithfully a method preserves the intended pairwise geometry. In every experiment, CSA is given a parametric family that contains the true bijection; this isolates the question of whether the method's construction is well-aligned with the conjugacy criterion, deferring questions of family-misspecification to the discussion.

\subsection{EDMD-based implementation}
\label{sec:results_edmd}

We instantiate the EDMD pipeline on the logistic-map family
\[
    f_k(x)=k\,x(1-x),
    \qquad x\in[0,1],
    \qquad k\in[2,4],
\]
sweeping $k$ across $10$ equally spaced values from $k=2$ (single fixed point) to $k=4$ (chaotic). The logistic map is used only as a controlled scalar-parameter family with non-trivial dynamics; nothing in what follows depends on its specific form. For a chosen observable dictionary $\Phi(x)=(\phi_1(x),\ldots,\phi_m(x))^\top$, we obtain the Koopman matrix $F_k$ representing $\mathcal K_{f_k}$ via EDMD. The dictionary is either Fourier (the constant plus $10$ paired sine--cosine harmonics, $m=25$) or Legendre (the first $m=25$ shifted polynomials), depending on the experiment.

Each logistic parameter $k$ produces one EDMD Koopman matrix in a chosen dictionary and state-space coordinate system. Computing $M\in\{\mathrm{DSA},\mathrm{CSA}\}$ on every pair across the $10$ logistic values yields a $10\times 10$ distance matrix. We build two such matrices: a reference $B$, in which both sides of every pair share the same configuration, and a comparison $D$, in which one side has been moved to a different configuration --- a different dictionary, or a state-space conjugacy applied to the data --- while the other has not. Because every transformed system is, by construction, conjugate to its untransformed counterpart, a conjugacy-invariant method should preserve the reference structure in the comparison: $D\approx B$ throughout. We summarize agreement by the normalized affine RMSE and the Pearson correlation between the entries of $D$ and $B$; precise definitions for each experiment are in Appendix~\ref{app:distance_matrices} and the reference matrices are in Appendix~\ref{app:intra_pairwise_distances}.

\subsubsection{CSA, but not DSA, recognizes systems related by a known state-space conjugacy}
\label{sec:results_edmd_transform}
\begin{figure}
    \centering
    \includegraphics[scale = 0.5]{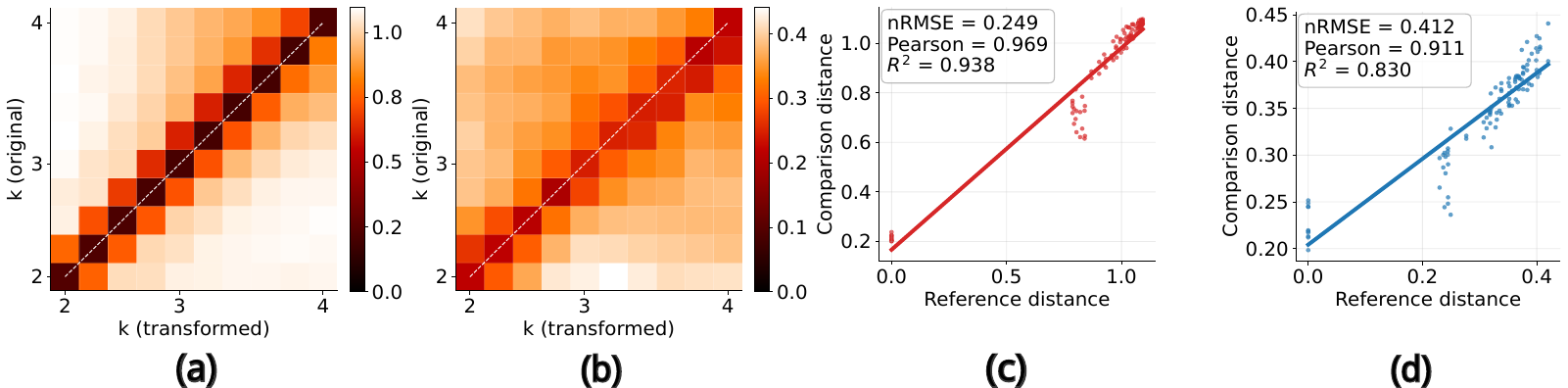}
    \caption{\textbf{State-space transformations with fixed dictionary.}
    Pairwise (comaprison) distance matrices for the two methods - CSA (a) and DSA (b),  when one copy of the logistic-map family is transformed by $h_\theta(x)=x+\theta\sin(\pi x)$ and both Koopman operators are represented in the same Fourier dictionary for both the methods. Scatter plots for reference and comparison matrix entries for CSA (c) and DSA (d).}
    \label{fig:state_transform_heatmap_scatter}
\end{figure}
\begin{figure}
    \centering
    \includegraphics[scale=0.48]{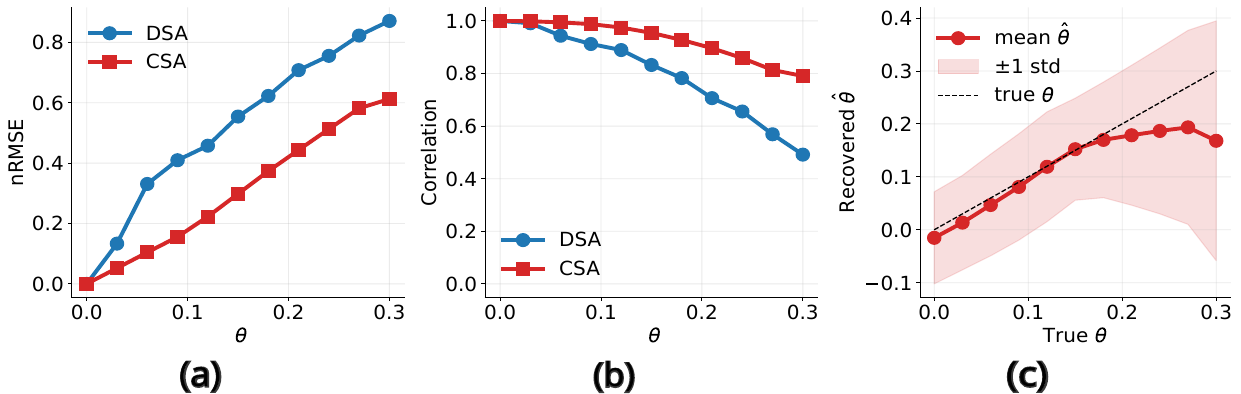}
    \caption{\textbf{Robustness to state-space transformations.} (a)
    nRMSE as a function of $\theta$ for the two methods - CSA (red), DSA (blue). (b) Correlation between entries of the comparison and matrix entries as a function of $\theta$. (c) CSA's recovered conjugacy parameter $\hat \theta$ as a function of $\theta$.}
    \label{fig:state_transform_summary}
\end{figure}
We first test whether each method correctly identifies systems that have been put through a known smooth conjugacy. We hold the dictionary fixed at the Fourier dictionary and vary only the state-space coordinates. From each $f_k$ we form the conjugate twin $\tilde f_{k,\theta} = h_\theta\circ f_k\circ h_\theta^{-1}$ via the smooth bijection $h_\theta(x)=x+\theta\sin(\pi x)$ with $|\theta|<1/\pi$, and fit its EDMD Koopman matrix on the conjugated regression pairs $\{(h_\theta(x_i), h_\theta(f_k(x_i)))\}$ in the same Fourier dictionary. The two sides of the comparison therefore produce Koopman matrices for systems that are exactly conjugate by the known map $h_\theta$, but represented in identical observable coordinates --- so any departure of $D$ from $B$ measures how the comparison method handles the bijection. CSA is given the parametric family $\{h_\theta\}_{\theta\in\mathbb R}$ and searches over $\theta$; DSA searches over orthogonal $Q\in O(m)$.

At a representative $\theta = 0.1$, the comparison matrices $D$ for the two methods, together with their entrywise scatter against $B$, separate the two methods cleanly (Figure~\ref{fig:state_transform_heatmap_scatter}). CSA preserves the diagonal banding of the reference and lies tightly along an affine line in the scatter (Fig.~\ref{fig:state_transform_heatmap_scatter}c), while DSA loses the banding and the scatter is visibly broadened (Fig.~\ref{fig:state_transform_heatmap_scatter}d). Across the full sweep $\theta\in[0,0.30]$, DSA's nRMSE rises rapidly and its Pearson correlation falls below $0.5$ before the sweep ends, whereas CSA-Fourier's nRMSE grows much more slowly and its correlation remains close to $1$ throughout (Fig.~\ref{fig:state_transform_summary}a,b). Beyond reporting a small distance, CSA also identifies the bijection responsible for it: the average $\hat\theta$ of the optimized CSA parameter tracks the diagonal $\hat\theta=\theta$ across the sweep (Fig.~\ref{fig:state_transform_summary}c). Together, these results show that, in this controlled setting, DSA fails as a conjugacy invariant, given that the orthogonal alignment class cannot bridge the non-orthogonal pullback induced by $h_\theta$. CSA, on the other hand, both identifies the conjugacy and quantifies its parameter.

\subsubsection{CSA, but not DSA, recovers equivalence when the same dynamics are written in different bases}
\label{sec:results_edmd_basis}
\begin{figure}
    \centering
    \includegraphics[scale=0.48]{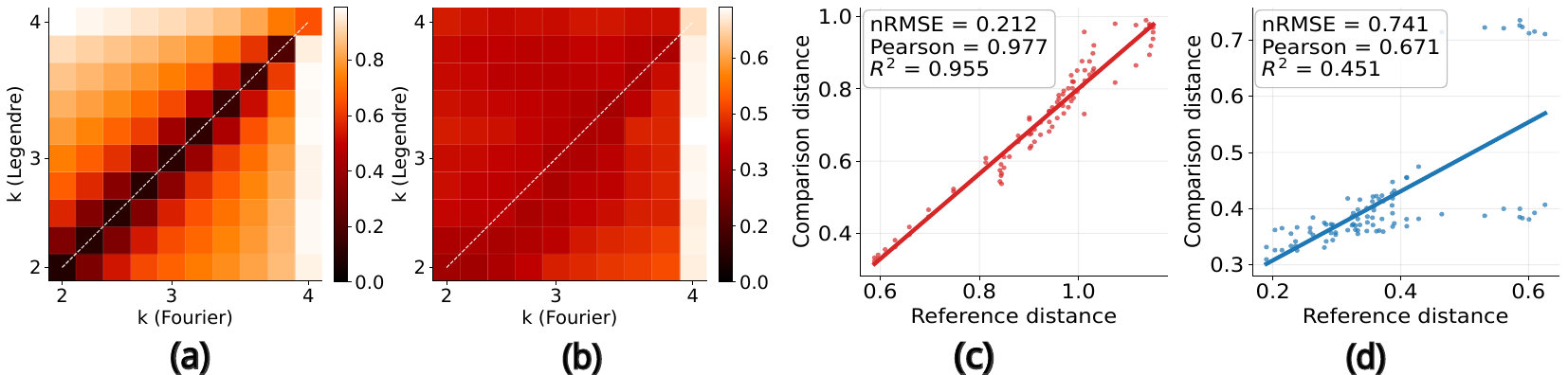}
    \caption{\textbf{Cross-dictionary comparison.}
    Pairwise  (comparison) distance matrices for the two methods - CSA (a) and DSA (b), when the same logistic-map family is
    represented in different dictionaries (Fourier (horizontal axis) vs.\ Legendre (vertical axis)), $\theta = 0$. Scatter plots of reference and comparison matrix entries for CSA (c) and DSA (d).}
    \label{fig:basis_change_heatmap_scatter}
\end{figure}
\begin{figure}
    \centering
    \includegraphics[scale=0.48]{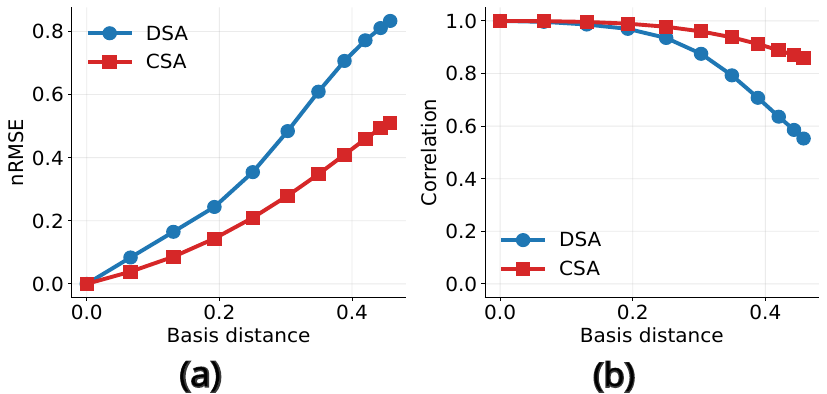}
    \caption{\textbf{Robustness to basis mismatch.}
    (a)
    nRMSE as a function of basis distance (Appendix~\ref{app:basis_distance}) for the two methods - CSA (red), DSA (blue). (b) Correlation between entries of the comparison and matrix entries as a function of distance between the bases. }
    \label{fig:pure_basis_change_summary}
\end{figure}

The first experiment held the dictionary fixed and varied the state-space coordinates. We now do the reverse: we fix the dynamics --- the same logistic system $f_k$ on each side --- and represent its Koopman operator in two distinct dictionaries. The state-space bijection is the identity, $h=\mathrm{id}$, so a conjugacy-invariant method should recognize the two representations as describing the same operator. The cross-basis comparison matrices $D$ between the Legendre and Fourier dictionaries at matched dimension $m=25$ separate the two methods (Fig.~\ref{fig:basis_change_heatmap_scatter}a,b): CSA recovers the reference geometry; DSA does not. The corresponding scatter plots against the reference $B$ (Fig.~\ref{fig:basis_change_heatmap_scatter}c,d) make the gap quantitative: CSA's nRMSE is $0.212$ and its Pearson correlation is $0.977$, while DSA's nRMSE is $0.741$ and its correlation drops to $0.671$. 

We hypothesized that DSA's failure should depend on the degree of misalignment between the two dictionaries' subspaces. To test this, we constructed a parametric family of subspaces interpolating between the Fourier and Legendre endpoints by sampling uniformly along the Grassmannian geodesic between the two subspaces of $L^2([0,1])$ (construction in Appendix~\ref{app:basis_distance}). For each intermediate subspace, we compute the Koopman operators (for all considered values of $k$) in that subspace and the corresponding nRMSE and correlation against the Fourier reference. As the projector distance grows, DSA's nRMSE rises nearly linearly and its Pearson correlation drops sharply, whereas CSA remains close to the reference geometry until the projector distance is large (Fig.~\ref{fig:pure_basis_change_summary}). This pattern is exactly what Section~\ref{sec:diagnosis} predicts: the two dictionaries describe identical operators, but their subspaces in $L^2([0,1])$ are not orthogonally related, so no $Q\in O(m)$ recovers the matched-basis distance. The non-orthogonal $P$ found by CSA does.

\subsection{Hankel-DMD/Krylov-subspace implementation}
\label{sec:CSA_krylov}
\begin{figure}[t]
  \centering
  \includegraphics[scale=0.48]{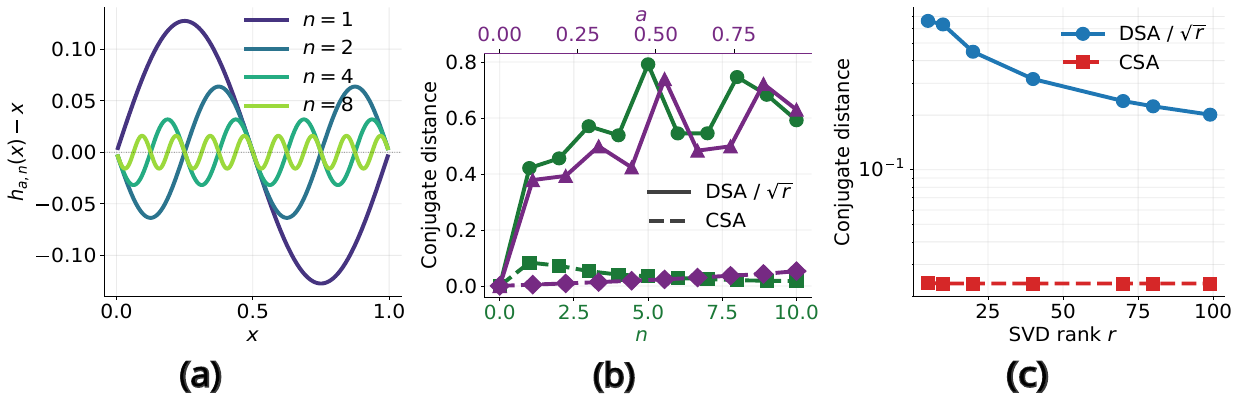}
  \caption{\textbf{Hankel-DMD/Krylov comparison under smooth conjugacies.}
(a) Deviations \(h_{a,n}(x)-x\) for \(a=0.8\). 
(b) Conjugate-pair distances under sweeps over \(n\) at fixed \(a=0.8\) and over \(a\) at fixed \(n=4\), at rank \(r=20\). 
(c) Conjugate-pair distances as a function of rank at \((a,n)=(0.5,4)\).}
  \label{fig:CSA_krylov_section3_2}
\end{figure}
We next move to the setting in which DSA is natively defined: Koopman operators learned from time-delay embeddings. From a scalar trajectory we form a Hankel matrix of delay dimension \(d\), truncate its SVD at rank \(r\), and fit the Koopman matrix in the truncated subspace by least squares (Appendix~\ref{app:krylov}). Crucially, in this pipeline the observable basis is not specified in advance --- it is selected implicitly by the SVD of each system's own Hankel matrix. Two conjugate systems can therefore pick different SVD-truncated bases even when their underlying dynamics are equivalent, putting Hankel-DMD squarely in the regime where Section~\ref{sec:diagnosis} predicts DSA's orthogonal alignment to be too weak. We refer to standard DSA in this setting as \emph{DSA-Krylov} and to its conjugacy-constrained counterpart as \emph{CSA-Krylov}.

Our reference dynamics is the contracting linear map $f_r(y)=\tfrac12+r(y-\tfrac12)$ with $r\in(0,1)$, which we conjugate by the two-parameter $C^\infty$ family of bijections
\[
    h_{a,n}(x)
    =
    x-\frac{a}{2\pi n}\,
    \sin\!\bigl(2\pi n\,(x-\tfrac12)\bigr),
    \qquad
    n\in\mathbb Z_{\ge 1},
    \quad
    a\in[0,1).
\]
Since $h'_{a,n}(x)=1-a\cos\bigl(2\pi n(x-\tfrac12)\bigr)\ge 1-a>0$, each $h_{a,n}$ is a diffeomorphism of $[0,1]$ fixing both endpoints; the amplitude $a$ controls how strongly $h_{a,n}$ departs from the identity, while $n$ sets the number of oscillations of the deviation $h_{a,n}(x)-x$ over the interval (Figure~\ref{fig:CSA_krylov_section3_2}(a)). The conjugate twin of $f_r$ is $g^{(r)}_{a,n}=h_{a,n}^{-1}\circ f_r\circ h_{a,n}$, and we build the four-system grid $\{f_{0.5}, f_{0.8}, g^{(0.5)}_{a,n}, g^{(0.8)}_{a,n}\}$, reporting the average over the two conjugate pairs $(f_r, g^{(r)}_{a,n})$ at $r\in\{0.5, 0.8\}$. For each system we generate $50$ trajectories of length $T=150$ and embed at delay dimension $d=100$; convergence under this truncation is verified in Appendix~\ref{sec:convergence}. CSA-Krylov searches over the same family used to build the twin. For visual comparison on a common scale, we plot $d_{\mathrm{DSA}}/\sqrt{r}$ alongside the CSA-Krylov distance; the $1/\sqrt{r}$ factor is the natural per-dimension normalization of a Frobenius residual and is irrelevant to the conjugacy-invariance question.

We first ask whether each method's distance is flat under varying conjugacies, as a conjugacy invariant should be. Two sweeps at rank $r=20$ --- the number of oscillations $n\in\{0,\ldots,10\}$ at fixed amplitude $a=0.8$, and the amplitude $a\in[0,0.95]$ at fixed $n=4$ --- make the contrast clear (Fig.~\ref{fig:CSA_krylov_section3_2}b). In both sweeps the underlying dynamics is the same contraction $f_r$ --- only the coordinate chart in which it is represented changes --- so any conjugacy invariant should be flat. CSA-Krylov is approximately flat: its distance never exceeds $\sim\!0.08$ over either sweep. DSA-Krylov, by contrast, rises from zero (at $n=0$ or $a=0$, where $h_{a,n}$ is the identity and the two systems coincide) to $0.5$--$0.8$ once the bijection departs appreciably from the identity. DSA-Krylov's distance therefore tracks features of the relabelling which have no bearing on conjugacy.

We next confirm that this gap is structural rather than a truncation artefact. The rank sweep at $(a, n)=(0.5, 4)$ on a log scale (Fig.~\ref{fig:CSA_krylov_section3_2}c) shows that CSA-Krylov is essentially flat at $\sim\!0.02$ across the range $r\in\{5,\ldots,99\}$, while $d_{\mathrm{DSA}}/\sqrt{r}$ decays roughly as $1/\sqrt{r}$ --- that is, the raw DSA distance is itself approximately rank-independent and the apparent decay is supplied entirely by the normalization. Both metrics are therefore well-converged in $r$. The mechanism is the one anticipated by §\ref{sec:diagnosis}: DSA fits each system's SVD independently, so the two truncations pick mismatched observable directions that no orthogonal $Q\in O(r)$ can bridge. CSA-Krylov constructs the Krylov subspace using the truncated SVD basis and solves the appropriate basis-transfer problem.

\section{Discussion}
\label{sec:discussion}

In this paper we showed that Koopman-based similarity analysis depends critically on the structure of the alignment being approximated. If the target notion is conjugacy, then the relevant alignment between observable spaces is not an arbitrary orthogonal transformation, but the pullback induced by a state-space bijection. This distinction matters both theoretically and empirically. Orthogonal alignment can identify similarities that do not correspond to any valid relabelling of states, and it can miss genuine conjugacies when the induced pullback is non-orthogonal or when the observable bases differ. Thus, a small distance between finite Koopman matrices is not, by itself, evidence of shared dynamics unless the admissible alignment class reflects the intended equivalence relation.

CSA instantiates this principle by restricting alignments to projected composition operators induced by candidate bijections. In controlled EDMD and Hankel-DMD settings, this was sufficient to recover conjugacy structure in regimes where DSA was sensitive to basis choice, state-space transformations, or SVD-selected observable subspaces. These experiments were intentionally controlled: CSA was given the relevant basis information and a bijection family containing the true map. The goal was not to present a fully automated comparison pipeline, but to isolate the mathematical issue that any conjugacy-aware pipeline must confront. Holding the learned operators fixed, the difference between CSA and DSA comes from the distance being optimized, and therefore from the structural assumptions each distance makes about equivalence.

This also clarifies what Koopman methods can and cannot buy us. Passing to linear operators is powerful, but it does not eliminate the need to reason about the state-space correspondence. The conjugacy reappears as a composition operator on observables, and its finite-dimensional projection is the object that must align the learned Koopman representations. In this sense, CSA lies between direct state-space matching and unconstrained operator matching: it uses the operator representation, but preserves the state-space map that gives the comparison its dynamical meaning.

Conjugacy is a strong notion of equivalence, and other scientific goals may call for weaker notions. For example, two systems may not be conjugate but may nevertheless share predictive structure under a fixed set of delays, or implement similar input--output maps in a task-relevant subspace. Such similarities may be meaningful, but they should not be confused with conjugacy. Likewise, spectral similarity or orthogonal matrix similarity may reveal useful structure without licensing claims about shared dynamics. The central question is therefore not simply how to design more flexible similarity metrics, but how to match the metric to the equivalence relation that the scientific claim requires.

CSA is not a metric in the formal sense, and its conclusions depend on the candidate bijection family and observable spaces supplied to it. These are limitations of the present instantiation, but they are also part of the point: any method that aims to detect conjugacy must make assumptions about admissible relabellings and about how observables transform across systems. Making those assumptions explicit is preferable to hiding them inside an unconstrained alignment problem.

The main contribution of this work is therefore a clarification with practical consequences. Koopman-based comparisons should not treat finite operator matrices as basis-free objects whose similarity can be interpreted independently of the observable spaces and transformations that produced them; identifying the pullback structure required by conjugacy shows why DSA-style orthogonal alignment can be misleading and points to a principled alternative. This matters wherever conjugacy is the claim being made — across recording sessions for BCI alignment, across animals for cross-subject comparison, between models and data for model validation, or across architectures and training regimes in machine learning. In each case, before asking whether two systems are similar, we must ask in what sense they are meant to be the same.

\bibliography{biblio}

\newpage
\appendix

\section{Unitary operators need not be composition operators}
\label{app:unitary_not_composition}

A composition operator \(\mathcal C_h\) induced by a bijection \(h\) acts on observables by
\[
(\mathcal C_h f)(x)=f(h(x)).
\]
Such operators preserve pointwise products:
\begin{equation}
\mathcal C_h(fg)(x)
=(fg)(h(x))
=f(h(x))\,g(h(x))
=(\mathcal C_h f)(x)\,(\mathcal C_h g)(x).
\label{eq:composition_preserves_products}
\end{equation}
Multiplicativity is a necessary property of composition operators\footnote{ This is precisely the reason why the Eigenfunctions of the Koopman operator form a multiplicative lattice \citep{budivsic2012applied}.}. 

Now consider the two-dimensional subspace
\[
V=\operatorname{span}\{1,x\}\subset L^2([-1,1]),
\]
with orthonormal basis
\[
e_1(x)=\frac{1}{\sqrt{2}},\qquad e_2(x)=\sqrt{\frac{3}{2}}\,x.
\]
Define \(U:V\to V\) by
\[
Ue_1=e_2,\qquad Ue_2=e_1.
\]
Its matrix in the basis \(\{e_1,e_2\}\) is
\[
\begin{pmatrix}
0 & 1\\
1 & 0
\end{pmatrix},
\]
so \(U\) is unitary. However, \(U\) is not a composition operator, because it does not preserve products. Indeed,
\[
e_1e_2=\frac{1}{\sqrt{2}}e_2,
\]
so
\[
U(e_1e_2)=\frac{1}{\sqrt{2}}e_1,
\]
whereas
\[
(Ue_1)(Ue_2)=e_2e_1=e_1e_2=\frac{1}{\sqrt{2}}e_2.
\]
Hence
\[
U(e_1e_2)\neq (Ue_1)(Ue_2).
\]
Therefore \(U\) is unitary but not a composition operator. 

\section{Spectral isomorphism doesn't imply dynamical conjugacy}
\label{app:counterexample}

\begin{proposition}
\label{prop:bernoulli_counterexample_appendix}
Let
\[
X:=\{0,1\}^{\mathbb Z}\times \{0,1\},
\qquad
Y:=\{0,1,2\}^{\mathbb Z}\times \{0,1\},
\]
equipped with the product probability measures
\[
\mu:=\mu_2\times \eta,
\qquad
\nu:=\mu_3\times \eta,
\]
where \(\mu_2\) and \(\mu_3\) are the Bernoulli product measures on
\(\{0,1\}^{\mathbb Z}\) and \(\{0,1,2\}^{\mathbb Z}\), respectively, and \(\eta\) is the uniform measure on \(\{0,1\}\). Define
\[
T:=\sigma_2\times \mathrm{id},
\qquad
S:=\sigma_3\times \mathrm{id},
\]
where \(\sigma_2\) and \(\sigma_3\) are the two-sided Bernoulli shifts. Then the Koopman operators \(\mathcal K_T\) and \(\mathcal K_S\) are unitarily equivalent, but \(T\) and \(S\) are not conjugate.
\end{proposition}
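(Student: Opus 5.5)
The plan has two independent halves. First, show that both Koopman operators have the same spectral type. Then separate $T$ and $S$ with an entropy invariant. Throughout, $\mathcal K_T$ and $\mathcal K_S$ act on $L^2(X,\mu)$ and $L^2(Y,\nu)$. They are unitary because $T$ and $S$ are invertible and measure preserving.

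\emph{Step 1 (tensor decomposition).} Since $\mu=\mu_2\times\eta$ and $T=\sigma_2\times\mathrm{id}$, we have
\[
L^2(X,\mu)\cong L^2(\mu_2)\otimes\mathbb C^2,
\qquad
\mathcal K_T\cong\mathcal K_{\sigma_2}\otimes I_2 ,
\]
and the analogous identities hold for $S$. Split $L^2(\mu_2)=\mathbb C\mathbf 1\oplus L^2_0(\mu_2)$, where $L^2_0$ denotes the mean-zero functions. This gives
\[
\mathcal K_T\cong I_{2}\oplus\bigl(\mathcal K_{\sigma_2}|_{L^2_0}\otimes I_2\bigr).
\]
The first summand is the eigenvalue $1$ with multiplicity $2$, carried by functions of the second coordinate. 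The same decomposition holds for $\mathcal K_S$.

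\emph{Step 2 (countable Lebesgue spectrum).} This is the main technical step. I would show that $\mathcal K_{\sigma_k}|_{L^2_0(\mu_k)}$ has countable Lebesgue spectrum for $k=2,3$. To do so, fix an orthonormal basis $\{1=\chi_0,\chi_1,\dots,\chi_{k-1}\}$ of $L^2(\{0,\dots,k-1\},\text{uniform})$, for instance the characters of $\mathbb Z_k$.

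Because $\mu_k$ is a product measure, the finite products $\prod_{i\in F}\chi_{a_i}(x_i)$ with all $a_i\neq 0$ and $F\subset\mathbb Z$ finite and nonempty form an orthonormal basis of $L^2_0(\mu_k)$. The shift maps each such basis element to the one with support $F+1$. The basis therefore splits into shift orbits $\{U^n e\}_{n\in\mathbb Z}$, and each orbit spans a copy of $\ell^2(\mathbb Z)$ on which $U$ acts as the bilateral shift, i.e.\ Lebesgue spectrum of multiplicity one. Choosing representatives with $\min F=0$ gives countably infinitely many orbits.

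Tensoring with $I_2$ doubles the multiplicity, which remains countably infinite. Hence both restrictions are unitarily equivalent to the bilateral shift on $\ell^2(\mathbb Z)\otimes\ell^2(\mathbb N)$. Combining this with Step 1, the Hahn--Hellinger classification (or simply the explicit matching of orthonormal bases orbit by orbit, together with the two invariant vectors) gives a unitary $W$ with $W\mathcal K_T W^{-1}=\mathcal K_S$. I expect the only delicacy to be verifying completeness of the product basis and the infinitude of orbits. Both follow from the standard tensor-product structure of $L^2$ of a product measure.

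\emph{Step 3 (non-conjugacy).} Suppose $h:X\to Y$ were a homeomorphism with $h\circ T=S\circ h$. Topological entropy is a conjugacy invariant. The factor $\mathrm{id}$ on a two-point space has zero entropy, and entropy is additive over products, so
\[
h_{\mathrm{top}}(T)=h_{\mathrm{top}}(\sigma_2)=\log 2\neq\log 3=h_{\mathrm{top}}(S).
\]
This is a contradiction. The same argument also excludes measure-theoretic conjugacy: the Kolmogorov--Sinai entropies are $h_\mu(T)=\log2$ and $h_\nu(S)=\log3$, again because the identity factor contributes nothing. This makes the counterexample robust to whichever notion of conjugacy is intended.
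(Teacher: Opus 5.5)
Your Steps 1 and 2 match the paper's argument and are fine: the product basis built from mean-zero characters, the shift orbits spanning copies of the bilateral shift, and the doubling from the $\{0,1\}$ factor. The gap is in Step 3. You read ``conjugate'' as conjugate by a homeomorphism, which is the textbook meaning. The paper, however, defines conjugacy in \eqref{eq:spatial_conjugacy} as the existence of an \emph{arbitrary} bijection $h:X\to Y$ with $h\circ f=g\circ h$. No continuity or measurability is assumed, and the proposition asserts that no such bijection exists. Neither topological nor Kolmogorov--Sinai entropy is invariant under such bare conjugacies. A bijective conjugacy only has to match orbit types: the number of periodic orbits of each period and the cardinality of the set of infinite orbits. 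For example, an irrational circle rotation and an infinite minimal subshift of positive entropy each consist of continuum many free $\mathbb Z$-orbits, so some bijection conjugates them even though their entropies differ. Your closing claim that the counterexample is ``robust to whichever notion of conjugacy is intended'' therefore fails exactly for the notion the paper uses.

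What is missing is an invariant that survives arbitrary bijections, and periodic-point counts are the natural one; this is the paper's argument. If $h\circ T=S\circ h$ with $h$ bijective, then $h\circ T^n=S^n\circ h$, so $h$ restricts to a bijection $\mathrm{Fix}(T^n)\to\mathrm{Fix}(S^n)$. An $n$-periodic bi-infinite sequence is determined by a word of length $n$, so $\#\mathrm{Fix}(T^n)=2\cdot 2^n\neq 2\cdot 3^n=\#\mathrm{Fix}(S^n)$ for every $n\ge 1$.

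Keep your entropy argument as a complement rather than a replacement. Periodic points form a null set, so counting them says nothing about measure-theoretic isomorphism mod~$0$. Kolmogorov--Sinai entropy does rule that out, and it is arguably the more natural comparison here, since unitary equivalence on $L^2$ is itself a measure-theoretic notion. With both arguments you cover bijective, topological, and measure-theoretic conjugacy.
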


\begin{proof}
We prove the two claims separately.

First, \(\mathcal K_T\) and \(\mathcal K_S\) are unitarily equivalent. For \(k=2,3\), let
\[
X_k:=\{0,1,\dots,k-1\}^{\mathbb Z},
\qquad
(\sigma_kx)_n=x_{n+1},
\]
with \(\mu_k\) the corresponding Bernoulli product measure. It is a standard fact that
\[
L^2(X_k,\mu_k)
=
\mathbb C \mathbf 1
\oplus
\bigoplus_{j=1}^\infty \mathcal H_j^{(k)},
\]
where on each \(\mathcal H_j^{(k)}\), the Koopman operator \(\mathcal K_{\sigma_k}\) acts as a copy of the bilateral shift on \(\ell^2(\mathbb Z)\). Since this decomposition is the same for \(k=2\) and \(k=3\), there exists a unitary
\[
W:L^2(X_2,\mu_2)\to L^2(X_3,\mu_3)
\]
such that
\[
W\mathcal K_{\sigma_2}=\mathcal K_{\sigma_3}W.
\]

Now
\[
L^2(X,\mu)\cong L^2(X_2,\mu_2)\oplus L^2(X_2,\mu_2),
\qquad
L^2(Y,\nu)\cong L^2(X_3,\mu_3)\oplus L^2(X_3,\mu_3),
\]
and under these identifications,
\[
\mathcal K_T=\mathcal K_{\sigma_2}\oplus \mathcal K_{\sigma_2},
\qquad
\mathcal K_S=\mathcal K_{\sigma_3}\oplus \mathcal K_{\sigma_3}.
\]
Hence \(\widetilde W:=W\oplus W\) is unitary and satisfies
\[
\widetilde W \mathcal K_T = \mathcal K_S \widetilde W.
\]
Thus \(\mathcal K_T\) and \(\mathcal K_S\) are unitarily equivalent.

Next, \(T\) and \(S\) are not conjugate. If there were a bijection \(h:X\to Y\) with
\[
h\circ T=S\circ h,
\]
then \(T\) and \(S\) would have the same number of periodic points of each period. But
\[
\#\mathrm{Fix}(\sigma_k^n)=k^n,
\]
since an \(n\)-periodic bi-infinite sequence is determined by a word of length \(n\). Therefore
\[
\#\mathrm{Fix}(T^n)=2\cdot 2^n,
\qquad
\#\mathrm{Fix}(S^n)=2\cdot 3^n.
\]
These are unequal for every \(n\ge 1\), so no such \(h\) can exist. Hence \(T\) and \(S\) are not conjugate.
\end{proof}

\subsection{Why the Bernoulli shifts are spectrally isomorphic}
\label{app:bernoulli_spectral}

For completeness, we briefly indicate the spectral decomposition used above. For \(k=2,3\), write
\[
X_k=\{0,1,\dots,k-1\}^{\mathbb Z},
\qquad
\mathcal H_k=L^2(X_k,\mu_k).
\]
Let
\[
L^2(\{0,1,\dots,k-1\},p_k)
=
\mathbb C\mathbf 1 \oplus \mathcal H_k^0,
\]
where \(p_k\) is the uniform measure on the alphabet, and choose an orthonormal basis
\[
e_1^{(k)},\dots,e_{k-1}^{(k)}
\]
for \(\mathcal H_k^0\).

For \(n\in\mathbb Z\), define the coordinate observables
\[
e_{n,\alpha}^{(k)}(x):=e_\alpha^{(k)}(x_n).
\]
More generally, for a finite set \(J=\{n_1<\cdots<n_r\}\subset\mathbb Z\) and multi-index \(\alpha=(\alpha_1,\dots,\alpha_r)\), define
\[
\psi_{J,\alpha}^{(k)}(x)
:=
\prod_{j=1}^r e_{\alpha_j}^{(k)}(x_{n_j}).
\]
Together with the constant function \(1\), these form an orthonormal basis of \(\mathcal H_k\).

The Koopman operator acts by translation of the index set:
\[
\mathcal K_{\sigma_k}\psi_{J,\alpha}^{(k)}
=
\psi_{J+1,\alpha}^{(k)},
\qquad
J+1:=\{n+1:n\in J\}.
\]
Hence each translation orbit of a finite pattern generates an invariant subspace on which \(\mathcal K_{\sigma_k}\) acts as the bilateral shift on \(\ell^2(\mathbb Z)\). Since there are countably many such orbits for both \(k=2\) and \(k=3\), both operators decompose as
\[
\mathbb C\mathbf 1
\oplus
\bigoplus_{j=1}^\infty \text{(copy of bilateral shift)}.
\]
Thus \(\mathcal K_{\sigma_2}\) and \(\mathcal K_{\sigma_3}\) are unitarily equivalent.

The important point is that this intertwiner is an abstract unitary operator on \(L^2\). It is not induced by a bijection of state spaces. This is precisely why spectral equivalence does not imply conjugacy. It is straightforward to see that CSA doesn't mistakenly qualify them as conjugate as it searches over the space of valid homeomorphisms.

 \begin{remark}There is some existing literature proposing that spectral conjugacy can, in some cases, imply topological conjugacy. \citep{mezic2020spectrum} generates eigenfunctions of a Koopman operator using well-known methods in Morse theory. Denoting as usual our dynamical system by $f:X\xrightarrow{} X$, let $W^+\subset X$ be a stable manifold to a fixed point $x_0$. Let $\phi:W^+ \xrightarrow{} \mathbb{R}^n$ be a (diffeomorphic) coordinate chart, and define $\textbf{y} = \phi(x)$. This chart can be picked such that there is a diagonal matrix $A$ with $\textbf{y}_{t+1} = A\textbf{y}_t$. That is, 
 \begin{equation*}
         \mathcal{K}_f\phi(x_t) = \phi(x_{t+1}) = \textbf{y}_{t+1} = A\textbf{y}_{t}= A\phi(x_{t}).
 \end{equation*}

 Projecting $\phi$ to each eigenspace in $\mathbb{R}^n$ gives us $n$ eigenfunctions. Their span is an invariant subspace of the Hilbert space. However, for this subspace to be orthonormally related to the EDMD/Hankel-DMD subspaces is not always true. Therefore, one cannot naively compute the DSA metric over orthonormal transforms and hope to recover conjugacy in such a scenario.

 \end{remark}

\subsection{Trivial extension to non-ergodic and non-measure-preserving systems}
\label{app:nmp_variant}

The counter-example above can be trivially generalized to non-measure-preserving systems as well. This is done by extending the state-space to include a non-measure-preserving map that is identical for both the systems. For instance,
\[
y_{t+1} = C(y_t):=\frac{y_t}{2}.
\]
Define
\[
\widehat T:=\sigma_2\times C,
\qquad
\widehat S:=\sigma_3\times C.
\]
The lack of conjugacy still follows from the Bernoulli component, since the periodic point counts in the shift coordinate remain different. The point of this variant is simply that the obstruction is not specific to the measure-preserving setting.

\section{When is P unitary?} \label{app:when_is_P_unitary?}
Since \(P\) is the matrix representation of \(\mathcal{C}_h\) in orthonormal bases, \(P\) is unitary if and only if \(\mathcal{C}_h\) is unitary. Equivalently,
\begin{equation}
    P^*P = I
    \qquad \Longleftrightarrow \qquad
    \langle \mathcal{C}_h u,\mathcal{C}_h v\rangle_X = \langle u,v\rangle_Y
    \quad \text{for all } u,v.
\end{equation}

If \(\mathcal H_X=L^2(X,\mu_X)\) and \(\mathcal H_Y=L^2(Y,\mu_Y)\), this is equivalent to
\begin{equation}
    \int_X |u(h(x))|^2\,d\mu_X(x)
    =
    \int_Y |u(y)|^2\,d\mu_Y(y)
    \qquad \text{for all } u,
\end{equation}
or, equivalently,
\begin{equation}
    h_\#\mu_X = \mu_Y.
\end{equation}
Thus \(P\) is unitary precisely when \(h\) preserves the relevant measure. In the special case where \(X,Y\subset\mathbb{R}^d\) and both observable spaces use Lebesgue measure, a smooth invertible map \(h\) is unitary only if
\begin{equation}
    |\det Dh(x)| = 1
    \qquad \text{a.e. on } X.
\end{equation}
Thus many perfectly valid conjugacies do not induce unitary operators. For example, the affine map
\begin{equation}
    h(x)=ax+b
\end{equation}
is unitary on \(L^2\) with Lebesgue measure only when \(|a|=1\). Likewise, the map
\begin{equation}
    h_\theta(x)=x+\theta \sin(\pi x),
    \qquad x\in[0,1],
\end{equation}
has derivative
\begin{equation}
    h_\theta'(x)=1+\theta\pi\cos(\pi x),
\end{equation}
which is not identically of unit modulus unless \(\theta=0\). Therefore, although \(h_\theta\) may be invertible for sufficiently small \(\theta\), the induced operator \(\mathcal{C}_{h_\theta}\) is generally not unitary, and hence neither is its matrix representation \(P_\theta\).

It is therefore important to distinguish invertibility of the conjugacy from unitarity of the induced composition operator. Conjugacy only requires \(h\) to be bijective and to satisfy \(h\circ f = g\circ h\). Unitarity is an additional and much stronger requirement.

As a simple example where \(P\) is unitary, consider a mirror reflection. Let \(X=Y=[-1,1]\), let \(h(x)=-x\), and suppose \(g\) is the reflected version of \(f\), i.e.
\begin{equation}
    g = h\circ f \circ h^{-1}.
\end{equation}
Then
\begin{equation}
    (\mathcal{C}_h u)(x)=u(-x).
\end{equation}
Since the reflection \(x\mapsto -x\) preserves Lebesgue measure,
\begin{align}
    \|\mathcal{C}_h u\|_{L^2([-1,1])}^2
    &= \int_{-1}^1 |u(-x)|^2\,dx
     = \int_{-1}^1 |u(y)|^2\,dy
     = \|u\|_{L^2([-1,1])}^2,
\end{align}
where we used the change of variables \(y=-x\). Hence \(\mathcal{C}_h\) is unitary, and therefore \(P\) is unitary in any pair of orthonormal bases.

If, moreover, the basis on \(X\) is chosen as the reflected image of the basis on \(Y\), i.e.
\begin{equation}
    \phi_i = \mathcal{C}_h \psi_i = \psi_i(-x),
\end{equation}
then the matrix representation of \(\mathcal{C}_h\) is simply
\begin{equation}
    P = I.
\end{equation}
If instead the same basis is used on both sides, then \(P\) need not be the identity, but it will still be unitary. For example, in a polynomial or Fourier basis, reflection acts by parity, so \(P\) becomes diagonal with entries \(\pm 1\).

\section{Construction of comparison and reference distance matrices}
\label{app:distance_matrices}

Let \(\{k_1,\dots,k_N\}\) be the set of logistic-map parameters under consideration. For each \(k_i\), let
\[
K_i^{(\Phi,\theta)}
\]
denote the finite-dimensional Koopman matrix obtained from the system at parameter \(k_i\), represented in dictionary \(\Phi\), with state-space transformation parameter \(\theta\). Here \(\theta=0\) denotes the untransformed system.

For a comparison method \(M\in\{\mathrm{DSA},\mathrm{CSA}\}\), the associated pairwise distance matrix is
\[
D^{M}\bigl[(\Phi_1,\theta_1),(\Phi_2,\theta_2)\bigr]\in\mathbb R^{N\times N},
\]
with entries
\[
D^{M}\bigl[(\Phi_1,\theta_1),(\Phi_2,\theta_2)\bigr]_{ij}
=
d_M\!\left(
K_i^{(\Phi_1,\theta_1)},
K_j^{(\Phi_2,\theta_2)}
\right).
\]

The matrix denoted by \(D\) in the main text is the comparison matrix relevant to the experiment under consideration, while \(B\) is the corresponding matched reference matrix.

\paragraph{State-space transformation experiment.}
When the dictionary is fixed to \(\Phi\) and one side is transformed by \(h_\theta\), the comparison matrix is
\[
D
=
D^{M}\bigl[(\Phi,0),(\Phi,\theta)\bigr],
\]
and the matched reference matrix is
\[
B
=
D^{M}\bigl[(\Phi,0),(\Phi,0)\bigr].
\]

\paragraph{Cross-dictionary experiment.}
When the underlying dynamics are unchanged but the dictionaries differ, the comparison matrix is
\[
D
=
D^{M}\bigl[(\Phi,0),(\Psi,0)\bigr].
\]
A natural matched reference is the symmetrized within-dictionary baseline
\[
B
=
\frac12\left(
D^{M}\bigl[(\Phi,0),(\Phi,0)\bigr]
+
D^{M}\bigl[(\Psi,0),(\Psi,0)\bigr]
\right).
\]
This symmetrization removes an arbitrary preference for one of the two endpoint dictionaries.

\paragraph{Matrix-level summary statistics.}
Given a comparison matrix \(D\) and reference matrix \(B\), we flatten their upper-triangular entries and fit the best affine map from \(B\) to \(D\). The reported normalized affine RMSE measures the residual error after this affine alignment, and the Pearson correlation quantifies the extent to which the pairwise geometry is preserved up to affine rescaling.

\section{Distance between observable subspaces}
\label{app:basis_distance}

This appendix details two ingredients of the basis-change experiment of
Section~\ref{sec:results_edmd_basis}: the construction of a one-parameter family of
$m$-dimensional observable subspaces interpolating between the Fourier and
Legendre endpoint dictionaries, and the projector distance used to label
each member of that family on the horizontal axis of
Figure~\ref{fig:pure_basis_change_summary}.

\paragraph{Subspace family between Fourier and Legendre.}
The two endpoint dictionaries are taken at matched dimension $m=25$: the
Fourier dictionary contains the constant together with the first ten
cosine/sine harmonics, and the Legendre dictionary contains the first $m$
Legendre polynomials on $[0,1]$.  We embed both into a common ambient
dictionary by simple concatenation, of dimension $p=2m$, and evaluate it
on a dense uniform grid $\{x_\ell\}_{\ell=1}^N$ of $N=4000$ points in
$[0,1]$ to obtain the ambient feature matrix
$A\in\mathbb R^{N\times p}$. The empirical Gram matrix
\begin{equation}\label{eq:emp_gram}
\hat G \;=\; \frac{1}{N}\,A^\top A \;\in\;\mathbb R^{p\times p}
\end{equation}
is the discrete approximation of the $L^2([0,1])$ inner product
$\langle \phi_i,\phi_j\rangle = \int_0^1 \phi_i(x)\phi_j(x)\,\mathrm dx$
on the ambient features under Lebesgue measure on $[0,1]$.  All inner
products and orthogonality statements below are with respect to $\hat
G$; the dense uniform grid sets the discretisation, and refining it
recovers the continuum $L^2$ pairing.

Let $S=\hat G^{1/2}$.  Pre-multiplication by $S$ is the change of variables
that turns $\hat G$-orthonormality into Euclidean orthonormality: a
coefficient matrix $C\in\mathbb R^{p\times m}$ satisfies $C^\top \hat G
C=I_m$ iff $U=SC$ satisfies $U^\top U=I_m$.  Let $C_F,C_L\in\mathbb
R^{p\times m}$ be the selector matrices that pick the Fourier and Legendre
blocks of the ambient dictionary, re-orthonormalised in the $\hat G$ inner
product.  The two endpoint subspaces are then represented by the
Euclidean-orthonormal frames
\begin{equation}\label{eq:endpoint_frames}
U_F = SC_F,\qquad U_L = SC_L \;\in\;\mathbb R^{p\times m}.
\end{equation}

For $t\in[0,1]$ the intermediate frame is the principal geodesic from
$U_F$ to $U_L$ on the Grassmannian $\mathrm{Gr}(m,p)$,
\begin{equation}\label{eq:geodesic}
U(t) \;=\; U_F P \cos(t\Theta) + W\sin(t\Theta),
\end{equation}
where $U_F^\top U_L = P\cos\Theta\,Q^\top$ is the thin SVD,
$\Theta=\mathrm{diag}(\theta_1,\ldots,\theta_m)$ collects the principal
angles between $U_F$ and $U_L$, and
$W=(U_L Q-U_F P\cos\Theta)\sin^{-1}\Theta$ is the orthonormal completion.
By construction $U(0)=U_F$ and $U(1)=U_L$, and $U(t)$ remains
Euclidean-orthonormal in the ambient space throughout.  The corresponding
intermediate dictionary, in $L^2$-coefficient form, is recovered by
inverting the change of variables, $C(t) = S^{-1}U(t)$, and re-orthonormalising
in the $\hat G$ inner product; on any evaluation grid the dictionary
itself is $\Phi^{(t)}(x) = A_{\text{ambient}}(x)\,C(t)$.  The Koopman
operators reported in Figure~\ref{fig:pure_basis_change_summary} are
fitted by the same EDMD procedure of Section~\ref{sec:results_edmd} using
$\Phi^{(t)}$ as the dictionary.

\paragraph{Projector distance between subspaces.}
Given two Euclidean-orthonormal frames $U_1,U_2\in\mathbb R^{p\times m}$
in the ambient feature space, the basis distance we report is the
normalised Frobenius distance between the corresponding orthogonal
projectors,
\begin{equation}\label{eq:basis_distance}
d_{\mathrm{basis}}(U_1,U_2)
\;=\;
\frac{\|U_1U_1^\top - U_2U_2^\top\|_F}{\sqrt{2m}}.
\end{equation}
It vanishes when the two $m$-dimensional subspaces coincide and equals
$1$ when they are mutually orthogonal.  In terms of the principal angles
$\theta_1,\ldots,\theta_m$ between the subspaces it admits the closed
form
\(
d_{\mathrm{basis}}^2 = \tfrac{1}{m}\sum_{i=1}^m\sin^2\theta_i,
\)
so it coincides, up to the $1/\sqrt{m}$ normalisation, with the chordal
Grassmann distance on $\mathrm{Gr}(m,p)$.  Because the endpoint frames
\eqref{eq:endpoint_frames} carry the empirical Gram square root $S$, the
Euclidean projector distance \eqref{eq:basis_distance} computed on
$U_F,U(t),U_L$ is exactly the projector distance between the underlying
function subspaces in $L^2([0,1])$ under the empirical inner product
\eqref{eq:emp_gram}.  The horizontal axis of
Figure~\ref{fig:pure_basis_change_summary} is
$d_{\mathrm{basis}}(U_F,U(t))$, which increases monotonically with $t$
from zero at $t=0$ to the endpoint distance $d_{\mathrm{basis}}(U_F,U_L)$
at $t=1$.

\section{Intra-transformation distance matrices} \label{app:intra_pairwise_distances}
In Figure \ref{fig:intra_pairwise_matrices} we show that both DSA and CSA are able to identify conjugacy across two Logistic systems when both of those systems are represented in the same basis $\Phi$ (either Fourier or Legendre) and they share the same state-space (either $x$ (called original) or $\hat x = h_\theta (x)$ (called transformed)). These matrices presented in this figure are used to generate the reference matrices in Appendix \ref{app:distance_matrices}. In particular we have plotted \[
D^{M}\bigl[(\Phi,\theta),(\Phi,\theta)\bigr]\in\mathbb R^{N\times N},
\] for $M \in \{\text{DSA},\text{CSA}\}$.
\begin{figure}
    \centering
    \includegraphics[scale=0.29]{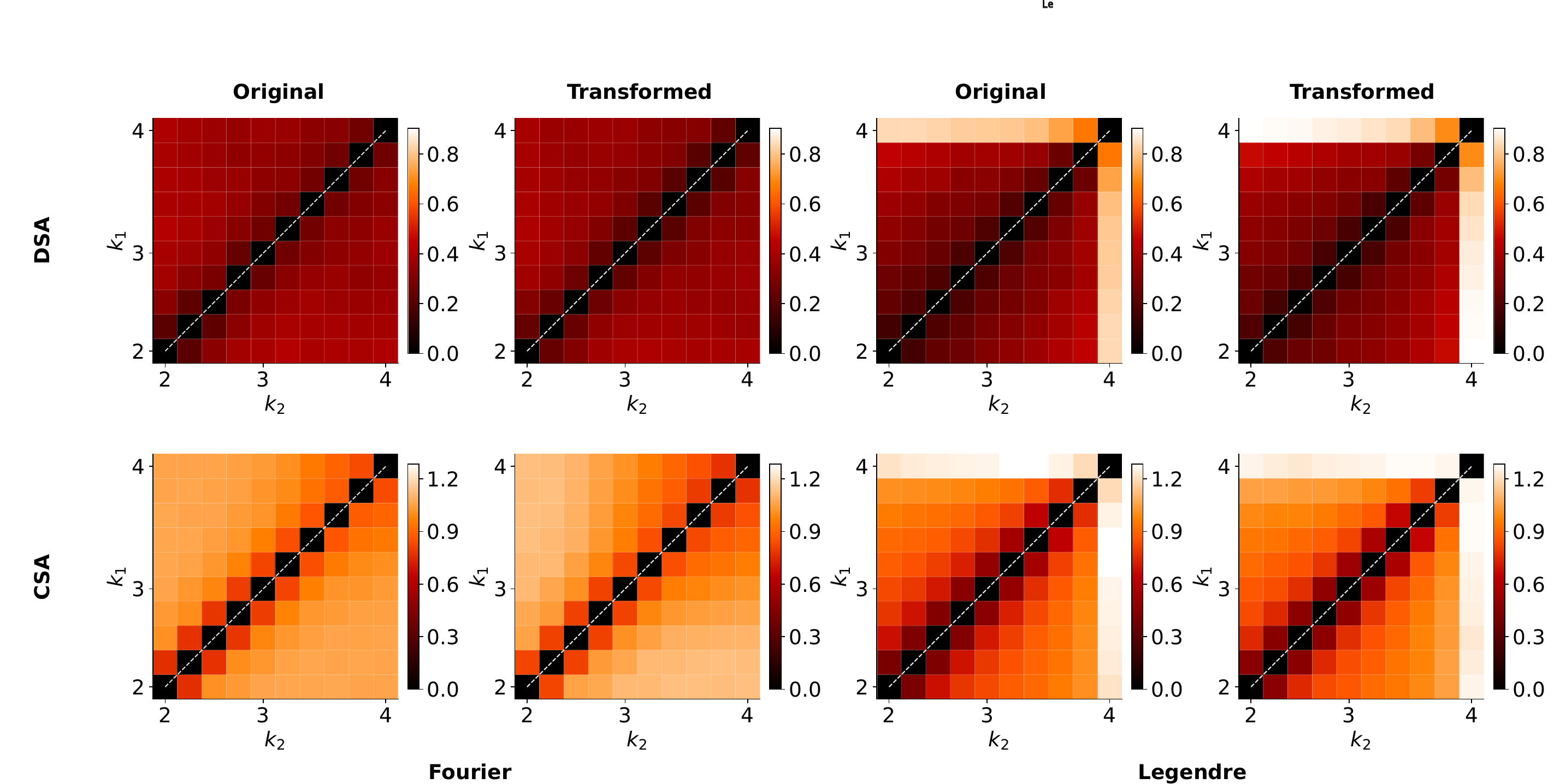}
    \caption{Pairwise distance matrices of within transformation pairs for both DSA (top) and CSA (bottom). $k_i$ represents the $k$ parameter of the Logistic map for the $i$th element in the pair as in the main text.}
    \label{fig:intra_pairwise_matrices}
\end{figure}

\section{Krylov subspace construction via delay-embedding SVD}
\label{app:krylov}

This appendix details how the Koopman operators compared by DSA-Krylov and
CSA-Krylov in Section~\ref{sec:CSA_krylov} are constructed from scalar
time-series data.  The
procedure follows the Hankel-DMD
framework~\citep{arbabi2017ergodic,ostrow2023beyond}, which we summarise here
for completeness and to fix notation.

% ──────────────────────────────────────────────────────────
\subsection{Delay-embedding and the Hankel matrix}

Let $f\colon [0,1]\to[0,1]$ be a one-dimensional map with Koopman operator
$\mathcal{U}$, and let $g(x) = x$ be the identity observable.  From a scalar
time series $\{x_0, x_1, \dots, x_{T-1}\}$ with $x_{t+1} = f(x_t)$, we form
the \emph{delay-embedding} (Hankel) matrix with embedding dimension~$d$:
\begin{equation}\label{eq:hankel}
  H = \begin{pmatrix}
    x_0     & x_1     & \cdots & x_{d-1}   \\
    x_1     & x_2     & \cdots & x_d       \\
    \vdots  & \vdots  & \ddots & \vdots    \\
    x_{T-d} & x_{T-d+1} & \cdots & x_{T-1}
  \end{pmatrix}
  \;\in\; \mathbb{R}^{N \times d},
  \quad N = T - d + 1.
\end{equation}
Each row is a delay-embedded state vector
$z_t = (x_t, x_{t+1}, \dots, x_{t+d-1})^\top \in \mathbb{R}^d$,
and each column~$j$ contains the Krylov observable
$\mathcal{U}^j g$ evaluated along the orbit:
$H_{\cdot,j} = (g(x_j), g(x_{j+1}), \dots, g(x_{j+N-1}))^\top$.
The column space of~$H$ therefore spans the data-driven Krylov subspace
$\mathcal{K}_d(\mathcal{U}, g) =
  \operatorname{span}\{g, \mathcal{U}g, \dots, \mathcal{U}^{d-1}g\}$.

When multiple independent trajectories are available, each is embedded
separately (so that no delay window crosses a trajectory boundary) and the
resulting matrices are stacked vertically, increasing~$N$ while
keeping~$d$ fixed.

% ──────────────────────────────────────────────────────────
\subsection{SVD truncation}

We compute the singular value decomposition
\begin{equation}\label{eq:svd}
  H = U\,\Sigma\,V^\top,
\end{equation}
with $U \in \mathbb{R}^{N \times d}$,
$\Sigma = \operatorname{diag}(\sigma_1 \geq \sigma_2 \geq \cdots \geq \sigma_d)$,
and $V \in \mathbb{R}^{d \times d}$ orthogonal.  Truncating at rank~$r \leq d$
defines the $r$-dimensional subspace
\begin{equation}\label{eq:Vr}
  V_r = V_{:,\,1{:}r} \;\in\; \mathbb{R}^{d \times r},
\end{equation}
whose columns form an orthonormal basis for the leading~$r$
right-singular directions.  The fraction of total energy captured is
$\sum_{i=1}^r \sigma_i^2 \big/ \sum_{i=1}^d \sigma_i^2$.

The projected Krylov observables at a delay-embedded point~$z$ are
\begin{equation}\label{eq:proj}
  \phi(z) = V_r^\top z \;\in\; \mathbb{R}^r.
\end{equation}

% ──────────────────────────────────────────────────────────
\subsection{Koopman operator in the truncated subspace}

Partition the delay vectors into current and next-step pairs:
\begin{equation}
  Z = (z_0, z_1, \dots, z_{N-2})^\top \in \mathbb{R}^{(N{-}1)\times d},
  \quad
  Z' = (z_1, z_2, \dots, z_{N-1})^\top \in \mathbb{R}^{(N{-}1)\times d},
\end{equation}
and project both onto the truncated subspace, $A=ZV_r$ and $B=Z'V_r$ in
$\mathbb{R}^{(N{-}1)\times r}$, so that the rows of $A$ and $B$ are
$\phi(z_t)^\top$ and $\phi(z_{t+1})^\top$ respectively.  Writing the
Koopman matrix $K\in\mathbb R^{r\times r}$ in the column-vector
convention $\phi(z_{t+1})\approx K\phi(z_t)$, the regularised
least-squares fit minimises $\|AK^\top-B\|_F^2+\lambda\|K^\top\|_F^2$,
\begin{equation}\label{eq:koopman_fit}
  K^\top = \bigl(A^\top A + \lambda I_r\bigr)^{-1} A^\top B.
\end{equation}

% ──────────────────────────────────────────────────────────
\subsection{DSA-Krylov: orthogonal alignment}

With $K_1, K_2 \in \mathbb{R}^{r\times r}$ fitted via~\eqref{eq:koopman_fit},
DSA-Krylov seeks the orthogonal matrix that best aligns them:
\begin{equation}\label{eq:dsa}
  d_{\mathrm{DSA\text{-}Krylov}}(K_1, K_2)
    = \min_{Q \in O(r)} \lVert K_1 - Q\,K_2\,Q^\top \rVert_F,
\end{equation}
solved by Riemannian optimisation on $O(r)$ with spectral
initialisation~\citep{ostrow2023beyond}.

Since the two systems may have been fitted in different SVD bases $V_{1,r}$
and $V_{2,r}$, the true basis change need not be orthogonal even when the
systems are dynamically equivalent. Restricting to $O(r)$ therefore conflates
coordinate artefacts with dynamical differences.

% ──────────────────────────────────────────────────────────
\subsection{CSA-Krylov: conjugacy-constrained basis change}

CSA-Krylov replaces the orthogonal search with a basis change \emph{derived}
from a candidate conjugacy $h_\theta$, in two steps.

\paragraph{Step 1: regression for $P_\theta$.}
Fix a dense grid $\{x_i\}_{i=1}^M \subset (0,1)$ and evaluate the projected
Krylov observables for both systems:
\begin{align}
  \Phi_1 &= \bigl[\phi_1(x_1), \dots, \phi_1(x_M)\bigr]^\top
           \in \mathbb{R}^{M \times r},
  \label{eq:phi1} \\
  \Phi_2 &= \bigl[\phi_2(h_\theta(x_1)), \dots,
             \phi_2(h_\theta(x_M))\bigr]^\top
           \in \mathbb{R}^{M \times r},
  \label{eq:phi2}
\end{align}
where $\phi_i(x) = V_{i,r}^\top\,(x, f_i(x), \dots, f_i^{d-1}(x))$ uses
the known dynamics to build Krylov observables at arbitrary evaluation points.
The basis-change matrix is then
\begin{equation}\label{eq:P_regression}
  P_\theta = \bigl[(\Phi_1^\top \Phi_1 + \lambda I)^{-1}\,
             \Phi_1^\top\,\Phi_2\bigr]^\top
           \in \mathbb{R}^{r \times r}.
\end{equation}

\paragraph{Step 2: intertwining residual.}
The CSA-Krylov distance is the normalised intertwining error minimised over
the conjugacy family:
\begin{equation}\label{eq:CSA_krylov}
  d_{\mathrm{CSA\text{-}Krylov}}(K_1, K_2)
    = \min_\theta \;
      \frac{\lVert P_\theta\,K_1 - K_2\,P_\theta \rVert_F}
           {\lVert P_\theta \rVert_F}.
\end{equation}
The minimum is found by coarse grid search followed by L-BFGS-B refinement.

\section{The CSA transfer matrix as a projected composition operator}
\label{app:CSA_projected_composition_operator}

We now make precise the sense in which the CSA transfer matrix is a finite-dimensional
projection of the composition operator induced by a candidate state-space map. This is the
direct analogue, for the basis-transfer operator, of the standard EDMD fact that the fitted
Koopman matrix is an empirical projection of the true Koopman operator and converges to
the corresponding population projection as the number of samples grows (see \citep{korda2018convergence}).

Let $(X,\mu_X)$ and $(Y,\mu_Y)$ be probability spaces, and let
\[
\mathcal H_X=L^2(X,\mu_X),\qquad
\mathcal H_Y=L^2(Y,\mu_Y).
\]
Let $h_\theta:X\to Y$ be measurable, and define the pullback, or composition, operator
\[
\mathcal C_{h_\theta}:\mathcal H_Y\to \mathcal H_X,
\qquad
(\mathcal C_{h_\theta}u)(x)=u(h_\theta(x)).
\]
Throughout this subsection, we assume that $\mathcal C_{h_\theta}\psi_j\in L^2(X,\mu_X)$ for the
finite collection of observables used below. For the complete-basis convergence statement
at the end, we assume the stronger condition that $\mathcal C_{h_\theta}$ is a bounded operator from
$\mathcal H_Y$ to $\mathcal H_X$.

Let
\[
V_\Phi=\operatorname{span}\{\phi_1,\ldots,\phi_m\}\subset \mathcal H_X,
\qquad
V_\Psi=\operatorname{span}\{\psi_1,\ldots,\psi_n\}\subset \mathcal H_Y,
\]
and write
\[
\Phi(x)=
\begin{bmatrix}
\phi_1(x)\\
\vdots\\
\phi_m(x)
\end{bmatrix},
\qquad
\Psi(y)=
\begin{bmatrix}
\psi_1(y)\\
\vdots\\
\psi_n(y)
\end{bmatrix}.
\]
We assume for simplicity that the observables are real-valued. The complex-valued case is
identical after replacing transposes by Hermitian transposes.

Given sample points $x_1,\ldots,x_M\in X$, define the empirical measure
\[
\widehat\mu_M=\frac1M\sum_{\ell=1}^M\delta_{x_\ell}.
\]
The CSA basis-transfer matrix for a fixed candidate map $h_\theta$ is defined by
\begin{equation}
P_{\theta,M}
=
\arg\min_{P\in\mathbb R^{n\times m}}
\frac1M\sum_{\ell=1}^M
\left\|
\Psi(h_\theta(x_\ell))-P\Phi(x_\ell)
\right\|_2^2 .
\label{eq:CSA_empirical_transfer}
\end{equation}
Here we use the row convention: the $j$th row of $P_{\theta,M}$ contains the coefficients,
in the $\Phi$-dictionary, of the approximation to the pulled-back observable
$\mathcal C_{h_\theta}\psi_j=\psi_j\circ h_\theta$. Thus $P_{\theta,M}$ is a row-coefficient matrix.
If one instead represents linear maps by column coefficients, the corresponding operator
matrix is $P_{\theta,M}^\top$.

\begin{proposition}[Finite-data transfer as an empirical projection]
\label{prop:finite_data_transfer_projection}
Assume that the empirical Gram matrix
\[
G_{\Phi,M}
=
\frac1M\sum_{\ell=1}^M
\Phi(x_\ell)\Phi(x_\ell)^\top
\]
is invertible. Let $\Pi_\Phi^{\widehat\mu_M}$ denote the
$L^2(\widehat\mu_M)$-orthogonal projection onto $V_\Phi$. Then, for each
$j=1,\ldots,n$,
\[
\sum_{i=1}^m (P_{\theta,M})_{ji}\phi_i
=
\Pi_\Phi^{\widehat\mu_M} \mathcal C_{h_\theta}\psi_j .
\]
Equivalently, $P_{\theta,M}$ is the row-coefficient matrix of the empirical projected
composition operator
\[
\Pi_\Phi^{\widehat\mu_M} \mathcal C_{h_\theta}\big|_{V_\Psi}:V_\Psi\to V_\Phi .
\]
\end{proposition}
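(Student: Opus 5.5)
The plan is to observe that the objective in \eqref{eq:CSA_empirical_transfer} decouples across rows of $P$, and that each row problem is exactly the normal-equations characterization of an $L^2(\widehat\mu_M)$-orthogonal projection onto $V_\Phi$. Writing $p_j^\top$ for the $j$th row of $P$, we have $\|\Psi(h_\theta(x_\ell))-P\Phi(x_\ell)\|_2^2=\sum_{j=1}^n(\psi_j(h_\theta(x_\ell))-p_j^\top\Phi(x_\ell))^2$. Hence the empirical risk equals $\sum_{j}\|\mathcal C_{h_\theta}\psi_j - p_j^\top\Phi\|_{L^2(\widehat\mu_M)}^2$, and minimizing over $P$ is the same as minimizing each summand over $p_j\in\mathbb R^m$ independently.

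For fixed $j$, the summand is the squared $L^2(\widehat\mu_M)$ distance from $u_j:=\mathcal C_{h_\theta}\psi_j$ to the element $\sum_i (p_j)_i\phi_i\in V_\Phi$. Because $G_{\Phi,M}$ is invertible, the functions $\phi_1,\dots,\phi_m$ are linearly independent as elements of $L^2(\widehat\mu_M)$. This is exactly positive definiteness of the empirical Gram matrix, and it gives two things. First, $V_\Phi$ is a genuine $m$-dimensional subspace of $L^2(\widehat\mu_M)$, so the orthogonal projection $\Pi_\Phi^{\widehat\mu_M}$ is well defined. Note that $L^2(\widehat\mu_M)$ identifies functions agreeing on the samples, so ``$=$'' in the statement is understood in that space, or as an identity of coefficient vectors. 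Second, the map $p\mapsto p^\top\Phi$ is a bijection $\mathbb R^m\to V_\Phi$. The minimizer over $p_j$ therefore corresponds to the unique best approximation of $u_j$ in $V_\Phi$, which by the Hilbert projection theorem is $\Pi_\Phi^{\widehat\mu_M}u_j$.

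To make the identification concrete, I would write the first-order conditions $\langle u_j - p_j^\top\Phi,\phi_i\rangle_{\widehat\mu_M}=0$ for all $i$. These are $G_{\Phi,M}p_j = b_j$ with $b_j=\frac1M\sum_\ell \Phi(x_\ell)\psi_j(h_\theta(x_\ell))$, so the unique minimizer is $p_j=G_{\Phi,M}^{-1}b_j$. Stacking the rows gives $P_{\theta,M}=B^\top G_{\Phi,M}^{-1}$, which confirms that the argmin is unique and well defined. The orthogonality conditions are precisely the defining property of the orthogonal projection, which yields $\sum_i (P_{\theta,M})_{ji}\phi_i=\Pi_\Phi^{\widehat\mu_M}\mathcal C_{h_\theta}\psi_j$.

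The ``equivalently'' clause then follows by linearity. For $v=\sum_j c_j\psi_j\in V_\Psi$, linearity of $\mathcal C_{h_\theta}$ and of the projection gives $\Pi_\Phi^{\widehat\mu_M}\mathcal C_{h_\theta}v=\sum_j c_j\sum_i(P_{\theta,M})_{ji}\phi_i$, so $P_{\theta,M}$ is the row-coefficient matrix of the restricted operator. There is one subtlety I expect to be the only real obstacle. If the $\psi_j$ are linearly dependent in $L^2(\mu_Y)$, the operator is still well defined on $V_\Psi$, but the coefficients $c_j$ are not unique. I would state the claim in terms of the images of the generators $\psi_j$, which is what the row convention encodes, and note that consistency is automatic because everything is linear in $v$. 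Otherwise the argument is routine least squares.
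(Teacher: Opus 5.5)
Your proposal is correct and follows essentially the same route as the paper: split the least-squares objective by rows, solve the normal equations using the invertible empirical Gram matrix $G_{\Phi,M}$, and recognize each row problem as exactly the $L^2(\widehat\mu_M)$ best-approximation problem that defines $\Pi_\Phi^{\widehat\mu_M}\mathcal C_{h_\theta}\psi_j$. Your extra remarks are refinements the paper leaves implicit: equality is understood in $L^2(\widehat\mu_M)$, i.e.\ on the samples, and you handle dependent generators. For the latter, the relevant notion is linear dependence at the points $h_\theta(x_\ell)$ (or pointwise), not dependence in $L^2(\mu_Y)$, because the empirical construction evaluates the $\psi_j$ at specific points.
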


\begin{proof}
The least-squares problem in \eqref{eq:CSA_empirical_transfer} separates over the rows of
$P_{\theta,M}$. Let $p_{j,M}^\top$ denote the $j$th row. Then $p_{j,M}$ solves
\[
\min_{p\in\mathbb R^m}
\frac1M\sum_{\ell=1}^M
\left(
\psi_j(h_\theta(x_\ell))-p^\top\Phi(x_\ell)
\right)^2 .
\]
Since $G_{\Phi,M}$ is invertible, the minimizer is unique and satisfies
\[
p_{j,M}
=
G_{\Phi,M}^{-1}
\frac1M\sum_{\ell=1}^M
\Phi(x_\ell)\psi_j^\top(h_\theta(x_\ell)).
\]
On the other hand, the empirical projection of $\mathcal C_{h_\theta}\psi_j$ onto $V_\Phi$ is,
by definition,
\[
\Pi_\Phi^{\widehat\mu_M}\mathcal C_{h_\theta}\psi_j
=
\arg\min_{f\in V_\Phi}
\|f-\mathcal C_{h_\theta}\psi_j\|_{L^2(\widehat\mu_M)} .
\]
Writing $f=p^\top\Phi$, this becomes exactly
\[
\arg\min_{p\in\mathbb R^m}
\frac1M\sum_{\ell=1}^M
\left(
p^\top\Phi(x_\ell)-\psi_j(h_\theta(x_\ell))
\right)^2 .
\]
Thus
\[
p_{j,M}^\top\Phi
=
\Pi_\Phi^{\widehat\mu_M}\mathcal C_{h_\theta}\psi_j .
\]
Applying this argument for each $j=1,\ldots,n$ proves the claim.
\end{proof}

We next pass from the empirical projection to the population projection.

\begin{proposition}[Infinite-data limit for fixed dictionaries]
\label{prop:infinite_data_transfer_projection}
Assume that $x_1,\ldots,x_M$ are i.i.d. samples from $\mu_X$ and that
\[
G_\Phi
=
\int_X \Phi(x)\Phi(x)^\top\,d\mu_X(x)
\]
is invertible. Assume also that the entries of $\Phi(x)\Phi(x)^\top$ and
$\Phi(x)\psi_j(h_\theta(x))$ are integrable for each $j=1,\ldots,n$.

Define the population transfer matrix
\begin{equation}
P_\theta
=
\arg\min_{P\in\mathbb R^{n\times m}}
\int_X
\left\|
\Psi(h_\theta(x))-P\Phi(x)
\right\|_2^2\,d\mu_X(x).
\label{eq:CSA_population_transfer}
\end{equation}
Then
\[
P_{\theta,M}\to P_\theta
\]
almost surely in any finite-dimensional matrix norm. Moreover, for each $j=1,\ldots,n$,
\[
\sum_{i=1}^m (P_\theta)_{ji}\phi_i
=
\Pi_\Phi^{\mu_X} \mathcal C_{h_\theta}\psi_j ,
\]
where $\Pi_\Phi^{\mu_X}$ is the $L^2(\mu_X)$-orthogonal projection onto $V_\Phi$.
Equivalently, $P_\theta$ is the row-coefficient matrix of the population projected
composition operator
\[
\Pi_\Phi^{\mu_X} \mathcal C_{h_\theta}\big|_{V_\Psi}:V_\Psi\to V_\Phi .
\]
\end{proposition}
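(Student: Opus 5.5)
The argument has two parts: the closed-form normal equations plus the strong law of large numbers for the convergence, and then the population analogue of the computation in Proposition~\ref{prop:finite_data_transfer_projection} for the projection identity.

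\textbf{Population solution.} Since the objective in \eqref{eq:CSA_population_transfer} separates over rows, I would write $p_j^\top$ for the $j$th row of $P_\theta$. The row $p_j$ minimizes $\int_X(\psi_j(h_\theta(x))-p^\top\Phi(x))^2\,d\mu_X(x)$. This is a strictly convex quadratic in $p$ because $G_\Phi$ is invertible, hence positive definite. Its unique minimizer is
\[
p_j=G_\Phi^{-1}b_j,\qquad b_j:=\int_X\Phi(x)\,\psi_j(h_\theta(x))\,d\mu_X(x).
\]
The integrability assumptions guarantee that $b_j$ and $G_\Phi$ are finite. The objective itself is finite whenever $\mathcal C_{h_\theta}\psi_j\in L^2(\mu_X)$, which the standing assumptions of this subsection provide.

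\textbf{Convergence.} By Proposition~\ref{prop:finite_data_transfer_projection}, the empirical row is $p_{j,M}=G_{\Phi,M}^{-1}b_{j,M}$, where $b_{j,M}=\frac1M\sum_\ell\Phi(x_\ell)\psi_j(h_\theta(x_\ell))$. Because the samples are i.i.d. and the entries are integrable, the strong law of large numbers, applied entrywise to the finitely many scalar averages, gives $G_{\Phi,M}\to G_\Phi$ and $b_{j,M}\to b_j$ almost surely. Taking a countable intersection of full-measure events keeps everything on a single almost-sure event.

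Matrix inversion is continuous on the open set of invertible matrices and $G_\Phi$ lies in that set. Hence $G_{\Phi,M}$ is invertible for all sufficiently large $M$ on this event, so $P_{\theta,M}$ is eventually well defined. It follows that $G_{\Phi,M}^{-1}\to G_\Phi^{-1}$, and so $p_{j,M}\to p_j$. All norms on $\mathbb R^{n\times m}$ are equivalent, so convergence in one norm gives convergence in any matrix norm.

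\textbf{Projection identity.} This repeats the proof of Proposition~\ref{prop:finite_data_transfer_projection} with $\widehat\mu_M$ replaced by $\mu_X$. The $L^2(\mu_X)$-orthogonal projection onto the finite-dimensional, hence closed, subspace $V_\Phi$ is the unique minimizer of $\|f-\mathcal C_{h_\theta}\psi_j\|_{L^2(\mu_X)}$ over $f\in V_\Phi$. Writing $f=p^\top\Phi$ turns this into exactly the row problem above.

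Uniqueness of the coefficients needs the $\phi_i$ to be linearly independent in $L^2(\mu_X)$. This follows from invertibility of $G_\Phi$: if $c^\top\Phi=0$ in $L^2(\mu_X)$, then $c^\top G_\Phi c=\|c^\top\Phi\|^2_{L^2(\mu_X)}=0$, so $c=0$. Hence $\sum_i(P_\theta)_{ji}\phi_i=\Pi^{\mu_X}_\Phi\mathcal C_{h_\theta}\psi_j$, and by linearity over $j$ this gives the row-coefficient description of $\Pi^{\mu_X}_\Phi\mathcal C_{h_\theta}|_{V_\Psi}$.

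\textbf{Main obstacle.} The only delicate point is that $G_{\Phi,M}$ need not be invertible for small $M$, so $P_{\theta,M}$ is only eventually defined. I would handle this as above, by restricting to the almost-sure event and to $M$ large enough that $G_{\Phi,M}$ is invertible. Everything else is routine.
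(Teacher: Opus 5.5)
Your proposal is correct and follows essentially the same route as the paper: row-wise normal equations, the strong law of large numbers for $G_{\Phi,M}$ and the cross-moments, continuity of matrix inversion, and the observation that the population projection onto $V_\Phi$ solves the same normal equations. Your added care about $G_{\Phi,M}$ being invertible only eventually on the almost-sure event, and about linear independence of the $\phi_i$ (which makes the coefficient matrix unique), tightens points the paper leaves implicit.
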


\begin{proof}
The population least-squares problem also separates row by row. The $j$th row of
$P_\theta$, denoted $p_j^\top$, is the unique minimizer of
\[
\min_{p\in\mathbb R^m}
\int_X
\left(
\psi_j(h_\theta(x))-p^\top\Phi(x)
\right)^2
\,d\mu_X(x).
\]
Its normal equations give
\[
p_j
=
G_\Phi^{-1}
\int_X
\Phi(x)\psi_j^\top(h_\theta(x))\,d\mu_X(x).
\]
Similarly, the $j$th row of $P_{\theta,M}$ satisfies
\[
p_{j,M}
=
G_{\Phi,M}^{-1}
\frac1M\sum_{\ell=1}^M
\Phi(x_\ell)\psi_j^\top(h_\theta(x_\ell)).
\]
By the strong law of large numbers,
\[
G_{\Phi,M}\to G_\Phi
\]
and
\[
\frac1M\sum_{\ell=1}^M
\Phi(x_\ell)\psi_j(h_\theta(x_\ell))
\to
\int_X
\Phi(x)\psi_j^\top(h_\theta(x))\,d\mu_X(x)
\]
almost surely. Since matrix inversion is continuous on the set of invertible matrices,
$p_{j,M}\to p_j$ almost surely for each $j$. There are finitely many rows, so
$P_{\theta,M}\to P_\theta$ almost surely in any matrix norm.

Finally, the identity
\[
\sum_{i=1}^m (P_\theta)_{ji}\phi_i
=
\Pi_\Phi^{\mu_X} \mathcal C_{h_\theta}\psi_j
\]
follows because the population projection
\[
\Pi_\Phi^{\mu_X}\mathcal C_{h_\theta}\psi_j
=
\arg\min_{f\in V_\Phi}
\|f-\mathcal C_{h_\theta}\psi_j\|_{L^2(\mu_X)}
\]
is obtained by the same normal equations.
\end{proof}

Thus, for fixed dictionaries, CSA estimates the finite-dimensional operator
\[
\Pi_\Phi^{\mu_X} \mathcal C_{h_\theta}\big|_{V_\Psi}:V_\Psi\to V_\Phi.
\]
This is the precise sense in which the fitted transfer matrix is a projected composition
operator. It is not an arbitrary linear map between coefficient spaces, and it is not merely
an orthogonal change of basis. It is the matrix representation of the pullback by $h_\theta$,
restricted to the source dictionary and projected into the target dictionary.

We finally record the complete-basis limit. This statement explains how the projected
composition operators recover the true composition operator as the dictionaries become
complete.

\begin{proposition}[Complete-basis limit]
\label{prop:complete_basis_composition_convergence}
Suppose that $\{\phi_i\}_{i\geq1}$ and $\{\psi_j\}_{j\geq1}$ are orthonormal bases of
$\mathcal H_X$ and $\mathcal H_Y$, respectively. Let
\[
V_m^X=\operatorname{span}\{\phi_1,\ldots,\phi_m\},
\qquad
V_n^Y=\operatorname{span}\{\psi_1,\ldots,\psi_n\},
\]
and let $\Pi_m^X:\mathcal H_X\to V_m^X$ and
$\Pi_n^Y:\mathcal H_Y\to V_n^Y$ denote the corresponding orthogonal projections.
If
\[
\mathcal C_{h_\theta}:\mathcal H_Y\to \mathcal H_X
\]
is bounded, then
\[
\Pi_m^X \mathcal C_{h_\theta}\Pi_n^Y
\to
\mathcal C_{h_\theta}
\]
in the strong operator topology as $m,n\to\infty$. That is, for every
$u\in\mathcal H_Y$,
\[
\left\|
\Pi_m^X \mathcal C_{h_\theta}\Pi_n^Y u
-
\mathcal C_{h_\theta}u
\right\|_{\mathcal H_X}
\to 0.
\]
\end{proposition}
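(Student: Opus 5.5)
The plan is a standard triangle-inequality argument that combines strong convergence of the projections with boundedness of the composition operator. Fix $u\in\mathcal H_Y$ and write
\[
\Pi_m^X \mathcal C_{h_\theta}\Pi_n^Y u-\mathcal C_{h_\theta}u
=
\Pi_m^X \mathcal C_{h_\theta}\bigl(\Pi_n^Y u-u\bigr)
+
\bigl(\Pi_m^X \mathcal C_{h_\theta}u-\mathcal C_{h_\theta}u\bigr).
\]
The proof then controls each term separately and lets $m,n\to\infty$ jointly, meaning that for every $\varepsilon>0$ there exist $m_0,n_0$ such that the norm is below $\varepsilon$ for all $m\ge m_0$ and $n\ge n_0$.

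\emph{First term.} Orthogonal projections have norm at most one, so
\[
\bigl\|\Pi_m^X \mathcal C_{h_\theta}(\Pi_n^Y u-u)\bigr\|_{\mathcal H_X}
\le \|\mathcal C_{h_\theta}\|\,\|\Pi_n^Y u-u\|_{\mathcal H_Y},
\]
and this bound holds uniformly in $m$. Since $\{\psi_j\}$ is an orthonormal basis, Parseval's identity gives
\[
\|\Pi_n^Y u-u\|^2=\sum_{j>n}|\langle u,\psi_j\rangle|^2\to0.
\]
Hence we can choose $n_0$ so that the first term is below $\varepsilon/2$ for all $n\ge n_0$ and every $m$. This is the only place where boundedness of $\mathcal C_{h_\theta}$ is used.

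\emph{Second term.} Set $w=\mathcal C_{h_\theta}u$. Boundedness guarantees $w\in\mathcal H_X$, so completeness of $\{\phi_i\}$ gives
\[
\|\Pi_m^X w-w\|^2=\sum_{i>m}|\langle w,\phi_i\rangle|^2\to0.
\]
Choose $m_0$ so that this term is below $\varepsilon/2$ for all $m\ge m_0$. This term does not depend on $n$.

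Adding the two bounds gives the claimed strong convergence. The only delicate point is the order of limits: the first term must be bounded uniformly in $m$ so that the double limit is genuine rather than iterated. The uniform bound $\|\Pi_m^X\|\le1$ settles this, so I expect no real obstacle. One optional closing remark would connect the result to Proposition~\ref{prop:infinite_data_transfer_projection}. With orthonormal $\Phi$ and $\mu_X$, the population matrix $P_\theta$ is exactly the matrix of $\Pi_m^X\mathcal C_{h_\theta}|_{V_n^Y}$, with entries $\langle \psi_j\circ h_\theta,\phi_i\rangle$, because $G_\Phi=I$. The CSA matrices therefore converge entrywise to the infinite matrix representation $P$ of $\mathcal C_{h_\theta}$ from Eq.~\eqref{eq:composition_matrix_condition}.
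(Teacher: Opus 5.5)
Your proposal is correct and follows the paper's proof essentially verbatim: you use the same splitting into $\Pi_m^X\mathcal C_{h_\theta}(\Pi_n^Y u-u)$ and $(\Pi_m^X-I)\mathcal C_{h_\theta}u$, bounded respectively by $\|\mathcal C_{h_\theta}\|\,\|\Pi_n^Y u-u\|$ and by completeness of $\{\phi_i\}$ applied to $\mathcal C_{h_\theta}u$. Your explicit remark that the first bound holds uniformly in $m$, so the limit is a genuine joint limit in $(m,n)$, sharpens a point the paper leaves implicit.
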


\begin{proof}
Fix $u\in\mathcal H_Y$. Then
\[
\begin{aligned}
\left\|
\Pi_m^X \mathcal C_{h_\theta}\Pi_n^Y u
-
\mathcal C_{h_\theta}u
\right\|_{\mathcal H_X}
&\leq
\left\|
\Pi_m^X \mathcal C_{h_\theta}(\Pi_n^Y u-u)
\right\|_{\mathcal H_X}
+
\left\|
(\Pi_m^X-I)\mathcal C_{h_\theta}u
\right\|_{\mathcal H_X} \\
&\leq
\|\mathcal C_{h_\theta}\|\,
\|\Pi_n^Y u-u\|_{\mathcal H_Y}
+
\left\|
(\Pi_m^X-I)\mathcal C_{h_\theta}u
\right\|_{\mathcal H_X}.
\end{aligned}
\]
Because $\{\psi_j\}_{j\geq1}$ is an orthonormal basis of $\mathcal H_Y$,
$\Pi_n^Y u\to u$ in $\mathcal H_Y$. Because $\{\phi_i\}_{i\geq1}$ is an orthonormal
basis of $\mathcal H_X$, $\Pi_m^X v\to v$ in $\mathcal H_X$ for every
$v\in\mathcal H_X$, and in particular for $v=\mathcal C_{h_\theta}u$. Hence both terms on
the right-hand side converge to zero.
\end{proof}

Combining Propositions~\ref{prop:finite_data_transfer_projection},
\ref{prop:infinite_data_transfer_projection}, and
\ref{prop:complete_basis_composition_convergence}, we obtain the following interpretation:
for fixed dictionaries, the CSA least-squares transfer is the empirical projection of the
composition operator induced by $h_\theta$; as the number of samples grows, it converges
to the population projected composition operator; and, under complete orthonormal
dictionaries and boundedness of the pullback operator, these projected operators converge
strongly to the true composition operator.

\section{Pseudocode and optimisation details}
\label{app:pseudocode}

We describe the computational procedures for CSA (Section~\ref{sec:CSA}) and
CSA-Krylov (Section~\ref{sec:CSA_krylov}). Both share a two-level structure:
an \emph{inner} regression that fits the basis-transfer matrix~$P_\theta$ for
each candidate~$\theta$, and an \emph{outer} search over $\theta$ that
minimises the intertwining residual.

% ─────────────────────────────────────────────────────────
\subsection{CSA}

In the EDMD setting the observable dictionaries $\Phi$ and $\Psi$ are known
explicitly, so the regression for $P_\theta$ reduces to a standard
least-squares problem.

\begin{algorithm}[H]
\caption{Conjugacy-based Similarity Analysis (CSA)}
\label{alg:CSA}
\begin{algorithmic}[1]
\Require Koopman matrices $F\in\mathbb{R}^{m\times m}$,
         $G\in\mathbb{R}^{n\times n}$; dictionaries $\Phi$, $\Psi$;
         conjugacy family $\{h_\theta\}_{\theta\in\Theta}$;
         evaluation grid $\{x_i\}_{i=1}^M$;
         regularisation $\lambda>0$;
         coarse grid $\theta_1,\dots,\theta_L\in\Theta$;
         number of refinements $n_{\mathrm{ref}}$
\Ensure  $d_{\mathrm{CSA}}$, $\theta^*$
\Statex
\State $\mathbf\Phi_1 \gets [\Phi(x_1),\dots,\Phi(x_M)]^\top \in\mathbb{R}^{M\times m}$
\Comment{Pre-compute dictionary on grid}
\State $G_\Phi \gets \mathbf\Phi_1^\top \mathbf\Phi_1 + \lambda I_m$
\Comment{Regularised Gram matrix}
\State $R \gets G_\Phi^{-1}\,\mathbf\Phi_1^\top$
\Comment{Regression operator, $m\times M$}
\Statex
\Function{Objective}{$\theta$}
    \State $\mathbf\Phi_2 \gets [\Psi(h_\theta(x_1)),\dots,\Psi(h_\theta(x_M))]^\top$
    \Comment{Dictionary of system 2 at $h_\theta(x)$}
    \State $P_\theta \gets (R\;\mathbf\Phi_2)^\top$
    \Comment{Least-squares transfer matrix, $n\times m$}
    \State \Return $\|P_\theta F - G\,P_\theta\|_F^2 \;/\; \|P_\theta\|_F^2$
\EndFunction
\Statex
\State \textbf{Coarse search:} evaluate \Call{Objective}{$\theta_j$} for $j=1,\dots,L$
\State $\theta^* \gets \arg\min_j$ \Call{Objective}{$\theta_j$}
\Statex
\State \textbf{Local refinement:} for the top-$n_{\mathrm{ref}}$ coarse candidates, run Brent's method (bounded scalar optimisation) on \Call{Objective}{$\cdot$} in a window $[\theta_j - \delta, \theta_j + \delta]$, where $\delta$ is the coarse grid spacing; update $\theta^*$ if any refinement improves the objective
\Statex
\State $d_{\mathrm{CSA}} \gets \sqrt{\textsc{Objective}(\theta^*)}$
\State \Return $d_{\mathrm{CSA}}$, $\theta^*$
\end{algorithmic}
\end{algorithm}

% ─────────────────────────────────────────────────────────
\subsection{CSA-Krylov}

In the delay-embedding setting the dictionaries are determined by the data
through the SVD of the Hankel matrix (Appendix~\ref{app:krylov}). Given the
known dynamics $f_i$ and the SVD basis $V_{i,r}$, the projected Krylov
observables can be evaluated at any point~$x$:
\[
  \phi_i(x) \;=\; V_{i,r}^\top
  \begin{pmatrix} x \\ f_i(x) \\ f_i^2(x) \\ \vdots \\ f_i^{d-1}(x) \end{pmatrix}
  \;\in\;\mathbb{R}^r.
\]
This is the analogue of the EDMD dictionary~$\Phi(x)$: a scalar state is
mapped to an $r$-dimensional feature vector whose entries are the leading
SVD components of the delay-embedding. CSA-Krylov then constructs $P_\theta$
and optimises over $\theta$ exactly as CSA (Algorithm~\ref{alg:CSA}), with
$\phi_1,\phi_2$ replacing $\Phi,\Psi$.

\section{Convergence diagnostics}
\label{sec:convergence}

We test that the Koopman operators we fit converge as the
representation truncation refines.

For the EDMD panels, we sweep $k\in\{2.0, 2.2,\ldots,4.0\}$ and fit two
Koopman matrices per~$k$: $F$ on trajectories of $f_k(x)=k\,x(1-x)$, and
$G$ on trajectories of $g_k=h_\theta\circ f_k\circ h_\theta^{-1}$ with
$h_\theta(x)=x+\theta\sin(\pi x)$ and $\theta=0.10$, matching the EDMD
experiments of Section~\ref{sec:results_edmd}; both EDMD-Fourier and
EDMD-Legendre are swept across $n_b\in\{11, 21, 41, 81, 121\}$ basis
functions.  For the Hankel-DMD panel we replay the smooth-bijection
setup of Section~\ref{sec:CSA_krylov}: linear contractions $f_r$ at
$r\in\{0.5, 0.8\}$ and their conjugate twins $g^{(r)}=h_{a^*,n^*}^{-1}
\circ f_r\circ h_{a^*,n^*}$ at the operating point $(a^*, n^*)=(0.5, 4)$,
with delay dimension $d=100$ fixed and rank swept across
$\{2,3,4,5,6,8,10,15,20,50,80\}$.

Figure~\ref{fig:appx-residual} reports the held-out state-prediction
residual
\[
\rho_x(K, f) \;=\; \frac{\sum_t \big(x_{t+1} - c^{\!\top} K\,\Phi(x_t)\big)^2}
                        {\sum_t x_{t+1}^2},
\qquad
c \;=\; \arg\min_{c}\; \sum_t \big(x_t - c^{\!\top} \Phi(x_t)\big)^2,
\]
where the readout $c$ is fit on training data so that
$c^{\!\top}\Phi(x)$ recovers the state. A natural length-scale is obtained by considering that the entire state-space is given by the unit interval $[0,1]$. Predicting the $1$-D state
with a fixed readout makes $\rho_x$ comparable across truncation
levels.  Fourier-EDMD converges algebraically in $n_b$ with a peak at $k=4$ where the state is most non-periodic.
On the other hand, Legendre-EDMD on $f_k$ has converged at every $n_b$.  For the Hankel-DMD panel the linear contractions saturate
at the floor already at rank $r=2$ — their Koopman spectrum is
essentially two-dimensional — while the smooth-bijection conjugates
require a larger truncation to be resolved: at $r=0.8$ the residual
drops by about four orders of magnitude between $r=2$ and $r=8$, after which it
plateaus.  All three representations exhibit monotone refinement,
confirming that the fitted operators converge under the truncation
sweep.

\begin{figure}[h]
  \centering
  \includegraphics[width=\linewidth]{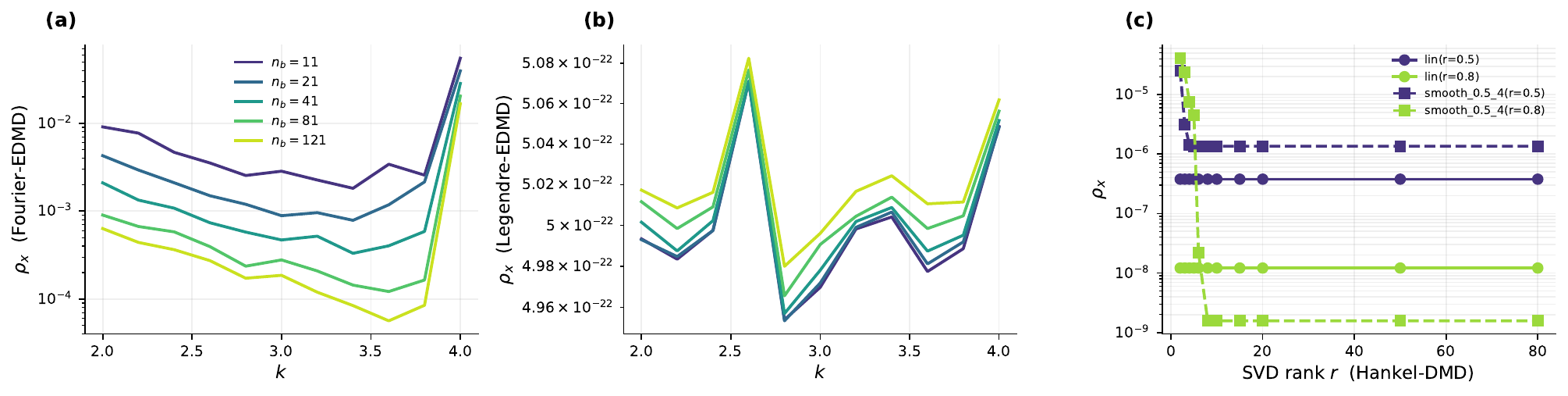}
  \caption{State-prediction residual $\rho_x$ versus truncation, by
    representation.  \textbf{(a)}~Fourier-EDMD on $f_k$; lighter
    shades denote larger $n_b$.  \textbf{(b)}~Legendre-EDMD on $f_k$. \textbf{(c)}~Hankel-DMD on the smooth-bijection setup of
    Section~\ref{sec:CSA_krylov} (linear contractions $f_r$ and
    their smooth-conjugate twins $g^{(r)}$ at $(a^*, n^*) = (0.5,
    4)$); the nonlinear conjugates need a larger truncation than
    the linear bases before $\rho_x$ saturates.}
  \label{fig:appx-residual}
\end{figure}

\newpage

\end{document}